\renewcommand{\thesection}{\arabic{section}}         % 1, 2, 3, 4, etc. section numbering
\renewcommand{\thesubsection}{\Alph{subsection}}    % A, B, C, etc. subsection numbering
\titleformat{\section}[hang]  % {command}[shape]
  {\large\bfseries}      % {format}
  {\thesection.}              % {label}
  {2ex}                       % {sep}
  {\centering}  % {before-code}[after-code]
\titleformat{\subsection}[hang] % {command}[shape]
  {\normalfont\bfseries}        % {format}
  {\thesubsection.}             % {label}
  {1ex}                         % {sep}
  {\centering}                  % {before-code}[after-code]
\titleformat{\subsubsection}[hang]  % {command}[shape]
  {\normalfont\bfseries}            % {format}
  {}                                % {label}
  {1ex}                             % {sep}
  {\centering}                      % {before-code}[after-code]
\newtheorem{assump1}{Assumption}
\newtheorem{proposition}{Proposition}
\newtheoremstyle{mystyle}%                % Name
  {}%                                     % Space above
  {}%                                     % Space below
  {\itshape}%                                     % Body font
  {}%                                     % Indent amount
  {\bfseries}%                            % Theorem head font
  {}%                                    % Punctuation after theorem head
  { }%                                    % Space after theorem head, ' ', or \newline
  {}%                                     % Theorem head spec (can be left empty, meaning `normal')
\let\oldproofname=\proofname
\renewcommand{\proofname}{\rm\bf{\oldproofname}}
\theoremstyle{definition}
\numberwithin{equation}{section}
\definecolor{gray}{gray}{0.95}
\newmdenv[linecolor=white,backgroundcolor=gray]{grayframe}
\title{Deep Partial Least Squares for Instrumental Variable Regression}
\author{	
	\makebox[.4\linewidth]{Maria Nareklishvili}\\
	\textit{\small  Booth School of Business}\\
	\textit{\small  University of Chicago}\\
    \and
	\makebox[.4\linewidth]{Nicholas Polson\footnote{Email: ngp@chicagobooth.edu}}\\
	\textit{\small  Booth School of Business}\\
	\textit{\small  University of Chicago}\\
	\and 
	\makebox[.4\linewidth]{Vadim Sokolov}\\
	\textit{\small  Department of Systems Engineering }\\
	\textit{\small  and Operations Research}\\
	\textit{\small  George Mason University}\\
}
\begin{document}
\maketitle

\begin{abstract} \noindent 
In this paper, we propose deep partial least squares for the estimation of high-dimensional nonlinear instrumental variable regression. As a precursor to a flexible deep neural network architecture, our methodology uses partial least squares for dimension reduction and feature selection from the set of instruments and covariates. A central theoretical result, due to \cite{brillinger2012generalized}, shows that the feature selection provided by partial least squares is consistent and the weights are estimated up to a proportionality constant. We illustrate our methodology with synthetic datasets with a sparse and correlated network structure and draw applications to the effect of childbearing on the mother's labor supply based on classic data of \cite{angrist1996children}.  The results on synthetic data as well as applications show that the deep partial least squares method significantly outperforms other related methods.  Finally, we conclude with directions for future research.
\end{abstract}

\textit{Keywords: Dimensionality Reduction, Deep Learning, Instrumental Variables, Partial Least Squares.} 

\section{Introduction}

Nonlinear instrumental variable (IV) regression is a vital tool for estimating the causal effect of exposure on certain outcomes. In fact, virtually all legitimate techniques for causal inference can be seen as manifestations of instrumental variables, including but not limited to randomized clinical trials (considered a perfect instrument), intention-to-treat analysis (stemming from random incentive allocation), natural experiments, and regression discontinuity (\citealp{heckman1995randomization}).  A valid instrumental variable meets the following assumptions: it has a significant effect on the treatment (i.e., relevance). It does not influence the outcome directly, through channels other than the treatment (i.e., the exclusion restriction), and a valid instrument is not associated with the unobserved characteristics that affect the outcome (i.e., exogeneity). Under these assumptions, the IV approach recovers consistent coefficients of interest. Traditionally, the techniques to estimate the causal parameter in the IV approach are well-suited for low and middle-scale data (\citealp{white1982instrumental, terza2008two}). These techniques most often fail with high-dimensional instruments (\citealp{mccullagh2018statistical}) and struggle to extract meaningful insights when instruments sparsely relate to each other.  Several papers investigate the approach with many, possibly correlated and sparse instrumental variables \citep{belloni2011inference, chernozhukov2015valid, chernozhukov2015post}. 

For predictive tasks, contemporary research has predominantly centered on narrow deep neural networks \citep{hartford2016counterfactual, polson2017deep,  liu2020deep, beise2021decision}, which are distinguished by their "self-featurizing" basis functions. In other words, feature extraction and dimension reduction are integrated into the pattern-matching algorithm. However, there are limitations to this approach, including a lack of theoretical understanding, difficulty in quantifying uncertainty, and limited capacity for probabilistic reasoning. To enhance the efficiency of deep neural networks, some authors introduce linear methods as a precursor to implementing a deep learning model.  \cite{debiolles2004combined, jia2016optimized} use partial least squares (PLS) to extract features before learning the output with deep neural networks. PLS is a linear method that selects relevant dimensions that are particularly relevant for predicting the outcome. 

To mitigate the issues inherent to high-dimensional instrumental variables, we propose deep partial least squares for IV regression. One can view our approach as merging two cultures of machine learning tools and statistical inference (\citealp{brillinger2012generalized, sarstedt2022progress, monis2022efficient}). Specifically,  we employ linear techniques, such as partial least squares to identify meaningful instruments and form predictors that are ensemble averages of deep learners. We show that, under several data-generating process assumptions, the method exhibits desirable large sample properties.

A traditional approach to estimating the IV regression is a  two-stage least-squares (2SLS) technique. In the first stage, valid instrumental variables predict the treatment. In the second stage, we estimate the effect of the predicted treatment (and possibly other control variables) on the outcome of interest (\citealp{mogstad2021causal}). When the relation of the treatment and instruments is nonlinear, \cite{terza2008two} show that the 2SLS method is inconsistent. \cite{iwata2001recentered} uses a re-centered and re-scaled outcome variable and proves the consistency of the method for censored and truncated data. \cite{terza2008two} and \cite{adkins2012testing} use predicted treatment residuals in addition to the treatment and other exogenous variables in the second-stage regression. Controlling for residuals in the outcome regression has a long history (\citealp{ florens2002instrumental, heckman2004using, navarro2010control}). The methods introduced by \cite{terza2008two, adkins2012testing} can also be identified as control function methods.

Our goal in this article is to extend the deep partial least squares (DPLS) method to extract relevant instruments in the first stage. More importantly, when the observed outcome variables consist of errors, we show that the method is consistent under a re-centered and re-scaled outcome described by \cite{iwata2001recentered} or after controlling for the predicted residuals in the second stage regression as illustrated by \cite{heckman2004using}.  The first layer of DPLS consists of a partial least squares method to extract features via hyperplanes in a high-dimensional setting. The subsequent layers efficiently post-process them based on ReLU networks as a deep learner (\citealp{yarotsky2017error}). PLS is an attractive method for feature extraction. However, as an alternative, the generalized method of moments (GMM, \citealp{hansen1982generalized}) can also be applied to our feature selection stage to provide variance stabilized estimators. This results in a statistical improvement on traditional stochastic descent estimators that are commonplace in machine learning (see, for example, \citealp{iwata2017improving}).

Several related articles apply instrumental variables to address challenges and research questions in business and industry. \cite{abadie2002instrumental} examine the effect of job training program participation on earnings.  By comparison, \cite{chernozhukov2004effects} investigate the effect of 401(k) participation on wealth.  \cite{moreno2014doing} mine text sentiment from user-generated comments on an online service platform, and subsequently estimate the impact of
(predicted) sentiment on buyers’ purchasing decisions. \cite{hartford2017deep} estimate the effect of the airline ticket price on demand. Another related article \citep{cengiz2022seeing} builds a boosted tree model to identify minimum wage workers based on their demographics, and then examines the effect of minimum wage policies on labor market outcomes for these workers. In this article, we revisit the effect of childbearing on women's labor supply \citep{angrist1996children}. The instruments consist of the second and third kids being twins and the interactions with parental characteristics that are relevant for predicting the mother's labor supply. 

In addition, we conduct two simulation experiments to demonstrate the predictive performance of DPLS in IV regression. The aim of these experiments is to mimic high-dimensional and sparse data, most often prevalent in business, finance, and economics. In the first experiment, we consider a high-dimensional IV space, where some of the instruments are redundant for predicting the treatment. The instruments are uncorrelated with each other. In the second experiment, we extend the first one by introducing a network structure among the instruments, as proposed by \cite{jeong2003measuring}. In this setup, the instruments are sparsely correlated, and some of them are not related to the treatment. Our aim is to illustrate whether DPLS can identify relevant instruments and improve prediction accuracy in both scenarios.

Statistical properties of deep neural networks are sparse but growing. \cite{polson2018posterior} discuss theoretical foundations of sparse rectified-linear-unit (ReLU) networks and the advantage of using Spike-and-Slab prior as an alternative to Dropout. They show that the resulting posterior prediction of ReLU networks with Spike-and-Slab regularization converge to a true function at a rate of $log^{\delta}(n)/n^{-K}$ for $\delta > 1$ and a positive constant $K$ (with $n$ number of observations). \cite{polson2019bayesian} provide a theoretical connection between the Spike-and-Slab priors and $L_0$ norm regularization. They demonstrate that the regularized estimators can result in improved out-of-sample prediction performance. To emphasize the advantages of regularization in deep neural networks,  \cite{bhadra2020horseshoe} discuss theoretical and empirical justifications (as well as challenges) for the horseshoe prior. 

By comparison, \cite{hanin2019deep} derive the sharp upper bound for the number of activation function regions in ReLU neural networks. They find that this number in practice is far from the maximum possible and depends on the number of neurons in the network, rather than the depth.  A recent paper by \cite{farrell2021deep} provides groundbreaking results on the asymptotic theory of deep neural networks. Under the assumption that the number of hidden layers, i.e., the depth of the network, grows with the sample size, they provide a high probability convergence rate for ReLU neural networks.

Our paper sheds light on the dimension reduction importance in sparse ReLU networks. Specifically, DPLS provides a simple, interpretable framework for modeling instrumental variables when the policy (treatment) and outcome are measured with errors. DPLS consists of a system of equations that can theoretically be viewed as an infinite sequential generalization of 2SLS. Beyond the consistency of this method, we demonstrate that the shrinkage in the first layer can substantially improve prediction performance in an instrumental variable regression with many instruments.

The results based on simulated experiments and the application show that DPLS significantly outperforms other related methods. Specifically, the first stage prediction performance improvement of DPLS relative to OLS, LASSO (\citealp{tibshirani2011regression}), and DeepIV (\citealp{hartford2016counterfactual}) is more than 79\%, 57\% and 22\%, respectively. We find that a deep, multi-layered structure of DPLS significantly increases predictive performance and representation learning ability. To incorporate the uncertainty of the model coefficients, we also consider the extension of DPLS to a Bayesian framework and discuss implications.

%In each design, we show that DPLS can substantially improve upon benchmark methods. In particular, simulation experiments show that the first stage prediction performance improvement of DPLS relative to OLS, LASSO (\citealp{tibshirani2011regression}) and DeepIV (\citealp{hartford2016counterfactual}) is more than 79\%, 57\% and 22\%, respectively. More importantly, we find that PLS and LASSO result in similar prediction performance in terms of $R^2$ and the mean squared error. However, a deep, multi-layered structure of PLS significantly increases predictive performance and representation learning ability. To incorporate the uncertainty of the model coefficients, we also consider the extension of DPLS to a Bayesian framework and discuss implications. 

One area for future research is the study of full uncertainty quantification. It is well known that posterior distributions for IV regression require careful assessment dating back to \cite{zellner1975bayesian} (see also \citealp{hoogerheide2007natural, lopes2010extracting}). \cite{puelz2017variable} discuss parameter uncertainty and variable selection in linear factor models.
\cite{hahn2020bayesian} provide a fully Bayesian model for posterior uncertainty for causal inference using Bayesian Additive Regression Trees (BART) to model the nonlinearity in the outcome equation. This provides a gold standard for comparison to a Bayesian DPLS method. 

The rest of the paper is outlined as follows. Section \ref{sec:model} describes our general nonlinear IV model and specific Tobit variation. In section \ref{sect_dim_red} we discuss dimensionality reduction with partial least squares. In Section \ref{sect_dpls_iv} we introduce deep partial least squares and examine the asymptotic theory. Section \ref{sect_applications} illustrates the applications of the method. Finally, Section \ref{sect_disc} concludes with directions for future research.

\section{Nonlinear IV Model}\label{sec:model}
One goal in this article is to predict the outcome $y^\star$ that possibly nonlinearly depends on a policy (treatment) $p^\star$ and predictors $x$ for each observation $i = 1 \dots, n$:
\begin{align}\label{eq_ystar}
    y^\star = p^\star\beta + x\beta_x + u.
\end{align}
We assume, neither $y^\star$ nor $p^\star$ are directly observable. Instead, they consist of errors. Define the following variables

\begin{align*}
(y^\star, y) \in \mathbb{R} =&  \; \; {\rm the \; potential \; and \; observed \; outcome \; variables},\\
(p^\star, p) \in \mathbb{R} =&  \; \; {\rm the \; potential \; and \; observed \; policy \; variables }, \\
x \in \mathbb{R}^k=&  \; \; {\rm observable \; features}, \\
z \in \mathbb{R}^m =&  \; \; {\rm instrumental \; variables}, \\
(u, w, v, \varepsilon) \in \mathbb{R}=&  \; \; {\rm latent/error \; variables \; that \; affect \;} (y^\star, p^\star, p, y), \;{\rm respectively,}  \\
m + k <& n.
\end{align*}

Instead of \eqref{eq_ystar},  we have access to the following structural equation model: 
\begin{align}\label{eq_struct}
    p = p^\star + v, \nonumber \\
    p^\star = g(z\alpha + x\alpha_z) + w,   \\
    y = \tau(y^\star) + \varepsilon. \nonumber
\end{align}
$g(\cdot)$ and $\tau(\cdot)$ are potentially non-linear continuous functions, possibly deep learners (\citealt{hartford2016counterfactual}). $\tau(\cdot)$ is a known transformation. In this study, we define the Tobit model $\tau(y^\star) = 1(y^\star > 0)\cdot y^\star$. The errors $v$ and $\varepsilon$ are assumed to be uncorrelated with any other latent error. However, policy $p^\star$ is allowed to correlate with $u$. Covariates $x$ are typically independent of $w$, $v$ and $u$. $\beta$ and $\beta_x$ represent the effects of $p^\star$ and $x$ on $y$, while $\alpha$ and $\alpha_z$ are the effects of instruments z and covariates $x$ on the policy $p^\star$, correspondingly. The errors $w$ and $u$ are correlated. For example, consider, $y$ is a customer's decision to buy an airline ticket (observed for a particular group of customers), and $p^\star$ is the price of this ticket. In that case, policy $p^\star$ is said to be endogenous when, conditional on $x$, $(p^\star, u)$ correlate. i.e, $\mathbb{E}(u|x, p^\star) \neq 0$. For example, ticket prices might increase during conferences that are unobservable to a researcher, and in that case, $\mathbb{E}(p^\star u|x) \neq 0$. Classical estimation methods, such as OLS, will lead to a spurious positive relation between prices and sales.

Define the joint error $\eta = w + v$. Moreover, from \eqref{eq_struct} we see that $p^\star = p - v$. In that case, \eqref{eq_ystar} becomes
\begin{align}\label{eq_ystar_new}
    y^\star = p \beta + x\beta_x + \xi, 
\end{align}
where $\xi = u - v\beta$. Then by combining \eqref{eq_ystar_new} and \eqref{eq_struct}, we get:
\begin{align}\label{eq_ivmodel}
    y &= \tau(p\beta + x\beta_x + \xi) + \varepsilon,  \\
    p &= g(z\alpha + x\alpha_z) + \eta, \nonumber 
\end{align}
where the latent errors $\xi$ and $\eta$ are correlated through $w$. The structural equation model in \eqref{eq_ivmodel} defines the nonlinear IV framework. In a traditional nonlinear IV setup (when $\tau(\cdot)$ is an identity function), the presence of valid instruments $z$ that satisfy Assumptions \ref{ass_rel}-\ref{ass_exog} allows us to predict the unbiased mean outcome.
\begin{assump1}[Relevance] \label{ass_rel}Instruments $z$ strongly relate to policy $p$, i.e., the density of $p$, $F(p|z, x)$, is not constant in $z$. 
\end{assump1}

\begin{assump1}[Exclusion Restriction] $z$ is conditionally orthogonal to outcome $y$: 
\begin{align*}
    z \perp y | \ (x, p, \xi).
\end{align*}
\end{assump1}

\begin{assump1}[Exogeneity]\label{ass_exog} $z$ is conditionally orthogonal to the latent error term $\xi$: 
\begin{align*}
 z \perp \xi | x.
\end{align*}
\end{assump1}

% Specifically, by taking the expectation on both sides of $h(p, x)$ and assuming that policy is stochastic with density $F(p\mid z,x)$, we get 
% \[\mathbb{E}(y\mid x,z) = \int_p h(p,x)dF(p\mid z,x).\]

Specifically, when $y = p\beta +  x\beta_x + \epsilon$ with $\epsilon = \xi + \varepsilon$, valid instruments efficiently separate information in $p$ that is unrelated to $\xi$, and result in the consistent estimate of the effect of the policy on the outcome ($\beta$). For example, if the fuel cost represents an instrument for ticket prices, and is unrelated to conferences, it can recover the exogenous variation in ticket prices and correctly estimate the negative relation between prices and ticket sales. 

Under Assumptions \ref{ass_rel}-\ref{ass_exog}, a standard approach to estimating $\beta$ is a 2SLS method. The method entails predicting $\hat{p}$ in the first stage ("treatment network"). Then the predicted policy $\hat{p}$ replaces $p$ in the outcome equation, and we estimate a second stage regression ("outcome network") with another consistent method (\citealp{angrist1996children, hartford2016counterfactual, mogstad2021causal}). 

Define the predicted mean outcome: 
\begin{align}
    \mathbb{E}(y|x, z) = \mathbb{E}[f(p, x)|x, z] + \mathbb{E}[\xi|x].
\end{align}
Then we can evaluate the effect of a marginal change in policy (e.g., prices) from $p_0$ to $p_1$ on the outcome of interest (treatment effect):
\begin{align*}
    \mathbb{E}(y|p_1, x) - \mathbb{E}(y|p_0, x) = \mathbb{E}[f(p_1, x)|x, z] - \mathbb{E}[f(p_0, x)|x, z].
\end{align*}

The issue is that $\tau(.)$ typically is not an identity function. In such nonlinear regression models, latent errors are no longer additively separable from the true regressors ($\hat{p}$ and $u$ are still related), and hence, the true relationship breaks down with errors in variables. The orthogonality condition of the instruments and the outcome error is violated: $\mathbb{E}\big(z(y - p\beta)|x\big)\neq 0$. As a result, the 2SLS estimator fails to be consistent for nonlinear errors-in-variables models (\citealp{amemiya1985instrumental}). 
Moreover, when instruments and/or covariates are close to the number of observations and nonlinearly relate to the policy and outcomes, OLS is no longer an efficient solution of \eqref{eq_ivmodel}. 

The goal of this paper is to predict outcome $y^\star$ with the policy that depends on many instruments $z$ and regressors $x$. To do so, we extend deep partial least squares with a recentered and rescaled outcome (\citealp{iwata2001recentered}) and additionally illustrate it with a control function approach (\citealp{terza2008two}).

\subsection{Predicton in Low-Dimensional Data}

In this section, our attention is directed towards predicting the outcome using a Tobit model, wherein $\tau(y^\star) = 1(y^\star > 0)y^\star$. We demonstrate that by recentering and rescaling the outcome, two-stage least squares (2SLS) can yield a consistent estimator of $\beta$. Alternatively, the control function approach can provide a valid prediction of the outcome.

\subsubsection{2SLS with the Tobit Model}
To establish consistency, we impose an essential assumption regarding the underlying process responsible for generating the observed data. 

\begin{assump1}[Elliptical Distribution]\label{ass_elliptic}
$ ( y^\star, z ) $ have a joint elliptical (or Gaussian in the simplest setting) distribution.
\end{assump1}

Assumption \ref{ass_elliptic} requires that the instrumental variables and the outcome are jointly elliptically distributed. The assumption can be strong in many settings. Though \cite{brillinger2012generalized} and \cite{iwata2001recentered} show that the results are robust to significant deviations from the assumption. This assumption can also be relaxed and generalized to other distributions (see e.g. \citealp{adcock2007extensions}).

Define the recentered and rescaled output variable that mimics the unobserved outcome $y^\star$,
$$
\tilde{y} = \psi_1^{-1} ( y - \psi_2 ), \; \; {\rm where} \; \; \psi_1 = \mathrm{cov}(y,y^\star ) /\mathrm{var}(y^\star) , \psi_2 = \mathbb{E}(y) -\psi_1 \mathbb{E}(y^\star ) .
$$
Specifically, $\psi_1$ is the ratio of the covariance between $y$ and $y^\star$ to the variance of $y^\star$, while $\psi_2$ is the intercept term when we consider a linear projection of $y$ on $y^\star$.  Next, consider a linear projection of z on $y^\star$

\begin{align*}
    \mathbb{E}(z|y^\star) = \mathbb{E}(z) + \frac{\mathrm{cov}(z, y^\star)}{\mathrm{Var}(y^\star)}\big[y^\star - \mathbb{E}(y^\star)].
\end{align*}

Hence,

\begin{align}\label{eq_steiner}
\mathbb{E}( z \mid y^\star ) - \mathbb{E}(z) = \gamma ( y^\star - \mathbb{E}( y^\star ) ),
\end{align}

where $\gamma = \mathrm{cov}(z, y^\star)/\mathrm{var}(y^\star)$ \footnote{Elliptical contours allow for discrete outcomes.}.  By Stein's lemma \citep{landsman2008stein},  we can calculate the covariance of instruments and rescaled output as 
\begin{align}\label{eq_replace}
\mathrm{cov}( z , \tilde{y} )   = \psi_1^{-1} cov( z,y) = \psi_1^{-1}  \mathbb{E}_{ y^\star } \left ( \mathbb{E}\big[(  z - \mathbb{E}(z) ) y \mid y^\star \big]  \right ). 
\end{align}

Conditional expectations are linear under elliptical contours. Additionally, $ \mathbb{E}( y \mid y^\star ) = \tau( y^\star ) $, $\gamma = \mathrm{cov}(z, y^\star) / var(y^\star)$, and replacing  the last equality from \eqref{eq_replace} with \eqref{eq_steiner} gives

\begin{align*}
\mathrm{cov}( z , \tilde{y} ) & =  \psi_1^{-1}  \mathbb{E}_{ y^\star } \left ( \mathbb{E}( \gamma ( y^\star - \mathbb{E}( y^\star ) )  y \mid y^\star   \right )  \\
&= \gamma \psi_1^{-1} \mathbb{E}(  ( y^\star - \mathbb{E}( y^\star ) )  y ) = \gamma \psi_1^{-1} \mathrm{cov}( y^\star , y  ) \\
& = \psi_1^{-1} \frac{cov(z, y^\star)}{var(y^\star)} \mathrm{cov}(y , y^\star )  = \psi_1^{-1}cov(z, y^\star)\psi_1 = cov(z, y^\star).
\end{align*} 

For additional details, see \cite{iwata2001recentered}. 2SLS first regresses $ p$ on $ z$ and $x$ to get the predicted policy $\hat{p}$. Define $\bar{Z} = [z, x]$ and  $\hat{p} = P_z \bar{Z}  $ where $ P_z = \bar{Z}( \bar{Z}^T \bar{Z} )^{-1} \bar{Z}^T $ is the projection matrix onto the instrumental variable space (including covariates). In addition, define $\bar{P} = [\hat{p}, x]$ and let $ e = \tilde{y} - p \beta $ be the residual from the re-scaled regression, then
$$
\mathrm{cov}(z,e | x) = \mathrm{cov}( z , \tilde{y} - p\beta |x ) = \mathrm{cov}( z , y^\star - p\beta | x ) = \mathrm{cov}(z,u | x) =  0.
$$
The re-scaled instrumental variable estimator is given by
\begin{align}\label{eq_gmm}
 \hat{\beta}^{GMM} = ( \bar{Z}^T P_z \bar{Z} )^{-1} P_z \bar{Z}^T \tilde{y} = ( \bar{P}^T \bar{P} )^{-1} \bar{P}^T \tilde{y}    
\end{align}

From the above, we then have $ \mathbb{E}( \hat{\beta}^{GMM} ) = \beta_p $ with $\beta_p = [\beta, \beta_x]$ and $ \mathrm{E}( \tilde{y} ) = \bar{P} \beta_p $. Note that $ \hat{\beta}^{GMM}$ is a $(m+k) \times 1$ vector, where the first element is the policy effect on outcome.

\subsubsection{Estimating the Proportionality Constants}

In the Tobit case, the constants $\phi_1, \phi_2$ can be calculated theoretically from the model as
$$
\phi_1 = \Phi \left ( \delta \right )  \; \; {\rm and} \; \; \phi_2 =  \sigma^\star_y \phi ( \delta ), 
$$
where $\Phi \left ( \delta \right )$ and $\phi ( \delta )$ are the cumulative and probability density functions of a normally distributed random variable, respectively, and 
$$
 \delta = \mu_X^T \beta / \sigma^\star_y \; \; {\rm and} \; \;  ( \sigma^\star_y  )^2 = \sigma^2_{y^\star } + \beta^T \Sigma_{xx} \beta .
 $$
\cite{iwata2001recentered} shows that, 
with the truncated outcome variable, we can obtain their estimators:
\begin{align*}
\hat{\psi}_1 = \hat{\Phi} = \dfrac{1}{n}\sum_{i=1}^n\mathds{1}(y >0), \\
\hat{\psi}_2 =  \hat{\sigma}^\star_y \hat{\phi } \; \; {\rm where} \; \hat{\phi } =  \phi \big( \Phi^{-1} ( \hat{\psi}_1 ) \big).
\end{align*}

The estimator of the variance of the outcome variable is given as
\[
 \hat \sigma_y^{\star 2} = \dfrac{1}{n \; c( \hat{k} ) }\sum_{i=1}^n(y_i - \bar y)^2,
\]
where 
$$ 
c( \hat{k} ) = \hat{\Phi} - ( \hat{\phi} - \Phi^{-1} ( \hat{\Phi} )  ( 1 - \hat{\Phi} ) ) ( \hat{\phi} +  \Phi^{-1} ( \hat{\Phi} )\hat{\Phi} ),
$$
and $\hat{\Phi} = \hat{\psi}_1 $ and $\phi$ are the cumulative and probability density functions of a normally distributed random variable, respectively. $\mathds{1}(y >0)$ is a binary variable and equals one if the outcome is positive.
\cite{iwata2001recentered} shows that the scaling constant $\widehat{\psi}_1$ and the estimator $\hat{\beta}^{GMM}$ defined in \eqref{eq_gmm} are asymptotically jointly normally distributed, with
\begin{align}\label{eq_asympt}
    \sqrt{n}(\hat{\psi}_1 - \psi_1) \rightarrow \mathcal{N}\big(0,\hat{\psi}_1 (1-\hat{\psi}_1)\big), \\
    \sqrt{n}(\hat{\beta}^{GMM} - \beta) \rightarrow \mathcal{N}\big(0, \Sigma_{\star} - \Phi(1-\Phi)\beta\beta^T\big).
\end{align}
Based on \cite{hansen1982generalized},  $\Sigma_{\star}$ is the variance of the standard GMM estimator and can be computed as follows \footnote{See also \cite{baum2003instrumental} for the discussion of a GMM estimator for the IV regression. }:
\begin{align*}
    \widehat{\Sigma}_{\star}& = n(\bar{P}^T\bar{Z}\widehat{G}\bar{Z}^T\bar{P})^{-1}\bar{P}^T\bar{Z}\widehat{G}(n\widehat{A}) \widehat{G}\bar{Z}^T\bar{P}(\bar{P}^T\bar{Z}\widehat{G}\bar{Z}^T\bar{P})^{-1}, \\
    \widehat{A} &= \frac{1}{n}\sum_i \widehat{e}^2_i \bar{Z}_i\bar{Z}_i^T, \ \  \text{with} \ \  \widehat{e}^2 = \widehat{\tilde{y}} - \bar{P}\hat{\beta}^{GMM}, \\
    \widehat{G} & = \widehat{H}\big( \widehat{H}^{-1} - \bar{Z}^T\bar{Z} /\bar{Z}^T\widehat{H}\bar{Z}\big)\widehat{H}, \\
    \widehat{H} & = \widehat{A}^{-1}.
\end{align*}

\subsection{Control Function Approach}
An alternative approach to estimate the parameters is a control function approach \citep{terza2008two}. Instead of substituting the predicted policy in the second stage regression, we control for the predicted residuals of the first stage:

\begin{align}
    \widehat{\eta} &= p - \widehat{p}, \\
    y &= \tau(p\beta + \widehat{\eta}\beta_\eta) + \varepsilon^{2SRI}, 
\end{align}
where the residuals of the treatment network $\widehat{\eta}$ can be predicted by any consistent method. Note that $\varepsilon^{2SRI}$ is not identical to $\varepsilon$ due to the substitution of $\xi$ with $\widehat{\eta}$.  The control function approach can be viewed as a special case of 2SLS. Specifically, the inclusion of $\widehat{\eta}$ in the outcome equation of \eqref{eq_ivmodel} allows us to control for the correlation of $\eta$ and $\xi$, and predict the outcome. 

\section{Dimensionality Reduction}\label{sect_dim_red}
The estimation of $\hat{\beta}^{GMM}$ in \eqref{eq_gmm} rests on the assumption that the number of instruments (as well as independent variables) is strictly lower than the number of observations. Throughout the proof, we maintain this assumption. 

\begin{assump1}
The number of covariates is strictly smaller than the number of observations, $m + k < n$, where $m$ is the dimension of the instruments and $k$ is the number of covariates. 
\end{assump1}

Nevertheless, when the dimension of the independent variables is close to $N$, $\hat{\beta}^{GMM}$ captures the unwanted variation reflected in the predicted policy $\hat{p}$ (\citealp{gagnon2013removing}). Since $\bar{Z}$ are independently and identically distributed elliptical random variables with covariance $\Sigma_{zz}$, \cite{brillinger2012generalized} shows that 
\begin{align}\label{eq_brilling}
    \text{cov}(\bar{Z}, p) = \text{cov}(\bar{Z}, f(U)) = \text{cov}(\bar{Z}, U) \text{cov}(f(U), U)/\text{var}(U) = k\Sigma_{zz}\alpha_z,
\end{align}
where $\bar{Z} = [z, x]$, $U = \bar{Z}\alpha_z$ with $\alpha_z = [\alpha, \alpha_x]$,  and the constant $k = \text{cov}(f(U), U)/\text{var}(U)$. Based on this result, \cite{brillinger2012generalized} shows that the OLS coefficient is a consistent estimate of $\alpha$ up to a proportionality constant. In this article, we show that the deep partial least squares method has the same property.

\subsection{Partial Least Squares}
Partial least squares is a dimensionality reduction method that generalizes and combines features from the principal component analysis and multiple regression (\citealp{abdi2010principal}). 

Consider the augmented instrumental variable $\bar{Z} = [z, x]$ with the dimension $n \times (m + k)$, where $m$ and $k$ are the dimensions of instruments and covariates, respectively. The policy $p = P$ is a $n \times 1$ vector as before. PLS can be summarized by the following relationship:
\begin{align}
    \bar{Z} = TV + F, \\
    P = UQ + E, \nonumber
\end{align}
where $T$ and $U$ are $n \times L$ projections (scores) of $\bar{Z}$ and $P$, respectively. $V$ and $Q$ are orthogonal projection matrices (loadings). Maximizing the covariance between the augmented instruments and the policy leads to the first PLS projection pair $(v_1, q_1)$:
\begin{align*}
  & \max_{v, q} \ (\bar{Z}v_1)^T(Pq_1) \\ 
  & \text{subject to} \ ||v_1|| = ||q_1|| = 1. 
\end{align*}
The corresponding scores are $t_1 = \bar{Z}v_1$ and $u_1 = Pq_1$. It is clear that the directions (loadings) for the policy $P$ and the augmented instruments $\bar{Z}$ are the right and left singular vectors of $\bar{Z}^TP$, respectively. PLS in the next step performs an ordinary regression of $U$ on $T$, namely $U = T\beta$. Then the next projection pair $(v_2, q_2)$ is found by calculating the singular vectors of the residual matrix $(\bar{Z} - t_1v_1^T)^T(P - T\beta q^T)$. Lastly, the final regression of interest is $U = T\beta$ (the tutorial \cite{geladi1986partial} contains further details).

The key property of PLS is that it is consistent for estimating parameters of interest even in the presence of nonlinearity via a sequence of covariance calculations. \cite{fisher1922mathematical} first observed this in the Probit regression. \cite{naik2000partial} show the consistency of PLS based on the result of \cite{brillinger2012generalized}.  One desirable property of PLS is that it has a closed-form solution. The PLS estimator of $\alpha$ is given as follows (\citealp{helland1990partial, stone1990continuum}): 

\begin{align}
    \widehat{\alpha}^{PLS} = \hat{R}(\hat{R}^TS_{zz}\hat{R})^{-1}\hat{R}^Ts_{zp},
\end{align}
where $\hat{R} = (s_{zp}, S_{zz}s_{zp}, \dots, S_{zz}^{q-1}s_{zp})$ is the $(m+k) \times q$ matrix of the Krylov sequence with a $(m+k) \times (m+k)$ matrix $S_{zz}$  and a $(m+k) \times 1$ vector $s_{zp}$ defined as follows:
\begin{align*}
    S_{zz} = \frac{\bar{Z}^T(I - 11^T/n)\bar{Z}}{n-1}, \\
    s_{zp} =  \frac{(\bar{Z} - \mathbb{E}(\bar{Z}))^T(P - \mathbb{E}(P))}{n-1},
\end{align*}
where $I$ is an identity matrix and $1$ is a matrix of ones. Intuitively, PLS searches for factors that capture the highest variability in $\bar{Z}$, and at the same time maximizes the covariance between $\bar{Z}$ and $P$. If the number of factors equals the dimension of instruments, $q = m+k$, the method is equivalent to OLS (\citealp{helland1990partial}).

\section{Deep Partial Least Squares for IV Regression}\label{sect_dpls_iv}

A useful generalization of PLS is to consider a deep-layered feed-forward neural network structure:
\begin{align} \label{eq_dnn}
    \hat{p}^{(1)} &= f(\bar{Z}\hat{\alpha}^{PLS}) \nonumber, \\
    \hat{p}^{(2)} &= f(\hat{p}^{(1)}\hat{\alpha}^{(2)}), \nonumber\\
   & \vdots \\
    \hat{p}^{(L)} &= f(\hat{p}^{(L-1)}\hat{\alpha}^{(L)}), \nonumber
\end{align}
where $f(\cdot) = \max(\cdot, 0)$ is a rectified linear unit (ReLU) activation function. Note that $\bar{Z} = [z, x]$ as before and $\hat{\alpha}^{PLS}$ is a $q \times 1$ vector where $q$ is the number of PLS factors in the treatment network. To reduce the dimensionality of the instruments, we predict the treatment in the first layer by PLS. The parameters $\alpha^{(\ell)}$ in subsequent layers $\ell = 2, \dots, L$ can be identified by OLS or PLS. The following proposition shows that the parameters are consistent in each layer up to a proportionality constant.  

\begin{proposition} \label{prop_1}
Let $S_{zz}$ and $s_{zp}$ converge in probability to $\Sigma_{zz}$ (the population variance of $z$) and $\sigma_{zp}$ (the population covariance of $z$ and $p$) when $n \rightarrow \infty$. Moreover, let there exist a pair of eigenvectors and eigenvalues $(v_j, \lambda_j)$ for which $\sigma_{zp} = \sum_{j=1}^M\gamma_jv_j$ (with $\gamma_j$ non-zero for each $j = 1, \dots, M$).  Assume also $\mathbb{E}(|g(U)|) < \infty$ and $\mathbb{E}(U|g(U)|) < \infty$ and q = M. Then $\hat{\alpha} = \{ \hat{\alpha}^{PLS}, \hat{\alpha}^{(2)}, \dots, \hat{\alpha}^{(L)}\}$ are consistent up to a proportionality constant. 
\end{proposition}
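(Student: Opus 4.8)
The plan is to first identify the population limit of the first-layer PLS estimator, show that this limit equals the true loading vector up to a scalar by combining a Krylov-subspace argument with Brillinger's identity \eqref{eq_brilling}, and finally propagate the proportionality claim through the deep layers by induction on $\ell$.

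First I would use the two convergence hypotheses, $S_{zz}\to\Sigma_{zz}$ and $s_{zp}\to\sigma_{zp}$ in probability, together with the continuous mapping theorem. Since the Krylov matrix $\hat{R}=(s_{zp},S_{zz}s_{zp},\dots,S_{zz}^{q-1}s_{zp})$ is a polynomial in $(S_{zz},s_{zp})$ and the closed-form PLS map $\hat{R}(\hat{R}^{T}S_{zz}\hat{R})^{-1}\hat{R}^{T}s_{zp}$ is continuous wherever the inner Gram matrix is invertible, I would conclude that $\hat{\alpha}^{PLS}$ converges in probability to its population analogue $\alpha_\ast^{PLS}=R(R^{T}\Sigma_{zz}R)^{-1}R^{T}\sigma_{zp}$, where $R=(\sigma_{zp},\Sigma_{zz}\sigma_{zp},\dots,\Sigma_{zz}^{M-1}\sigma_{zp})$.

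The crux is identifying $\alpha_\ast^{PLS}$. Writing $\sigma_{zp}=\sum_{j=1}^{M}\gamma_j v_j$ in the eigenbasis of $\Sigma_{zz}$, each Krylov vector becomes $\Sigma_{zz}^{i}\sigma_{zp}=\sum_{j=1}^{M}\gamma_j\lambda_j^{i}v_j$. Because the $\gamma_j$ are nonzero and the coefficient array is Vandermonde in the (distinct) $\lambda_j$, the $M$ Krylov vectors are linearly independent and span exactly $\mathrm{span}\{v_1,\dots,v_M\}$ precisely when $q=M$. Meanwhile Brillinger's identity \eqref{eq_brilling} gives $\sigma_{zp}=k\Sigma_{zz}\alpha_z$, so the true direction $\alpha_z=k^{-1}\Sigma_{zz}^{-1}\sigma_{zp}=k^{-1}\sum_j(\gamma_j/\lambda_j)v_j$ already lies in $\mathrm{span}\{v_1,\dots,v_M\}$. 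Since the PLS map returns the $\Sigma_{zz}$-orthogonal projection of the full least-squares solution $\Sigma_{zz}^{-1}\sigma_{zp}$ onto the Krylov column space, and that solution is already contained in the Krylov space, the projection is exact: $\alpha_\ast^{PLS}=\Sigma_{zz}^{-1}\sigma_{zp}=k\alpha_z$. Hence $\hat{\alpha}^{PLS}\to k\alpha_z$ in probability, proportional to the truth. The moment conditions $\mathbb{E}(|g(U)|)<\infty$ and $\mathbb{E}(U|g(U)|)<\infty$ are exactly what guarantees that $k=\mathrm{cov}(g(U),U)/\mathrm{var}(U)$ is finite and well defined, so that \eqref{eq_brilling} applies.

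For the deep layers I would argue by induction on $\ell$. Suppose the estimated pre-activation at layer $\ell-1$ is consistent for the true signal up to a nonzero scalar. Each subsequent coefficient $\hat{\alpha}^{(\ell)}$ is obtained by regressing $\hat{p}^{(\ell)}$ on $\hat{p}^{(\ell-1)}$ in \eqref{eq_dnn}, a single-index fit with ReLU link; applying Brillinger's generalized-linear-model result once more shows that the OLS/PLS coefficient is consistent for the layer's true loading up to a further proportionality constant, and the product of these scalars is again a single nonzero constant. I expect the main obstacle to be precisely this inductive step: Brillinger's theorem requires the regressors to be (jointly) elliptically distributed, but the ReLU transformation does not preserve ellipticity, so I would need either to restrict to the scalar-coefficient case (where the direction is trivially identified and only the scale is at issue) or to invoke an additional structural condition ensuring the relevant Stein-type covariance relation survives the nonlinearity. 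Controlling this loss of ellipticity cleanly across layers is the delicate part of the argument.
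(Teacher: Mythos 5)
Your proposal is correct and follows essentially the same route as the paper: the first-layer argument is the Naik--Tsai/Helland identification of the population PLS limit as the $\Sigma_{zz}$-orthogonal projection of $\Sigma_{zz}^{-1}\sigma_{zp}$ onto the Krylov space (which is exact when $q=M$), combined with Brillinger's identity \eqref{eq_brilling} to get $k\alpha$, and the deeper layers are handled by induction. The only mechanical detail the paper adds that you omit is the positive homogeneity of the ReLU, $\max(kt,0)=k\max(t,0)$ for $k>0$, used to pull the accumulated constant through each activation so that $\hat{p}^{(\ell)}$ is an exact scalar multiple of the true layer signal; like you, the paper simply re-invokes \eqref{eq_brilling} at each subsequent layer, so the loss-of-ellipticity issue you correctly flag is present in, and not resolved by, the paper's own argument.
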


\begin{proof}
We follow the approach by \cite{naik2000partial}. Let $\alpha^\star = \Sigma_{zz}^{-1}\sigma_{zp}$. Define $R = (\sigma_{zp}, \Sigma_{zz}\sigma_{zp}, \dots, \Sigma_{zz}^{q-1}\sigma_{zp})$. Then, based on the assumption that $S_{zz} \rightarrow \Sigma_{zz}$ and $s_{zp} \rightarrow \sigma_{zp}$  when n $\rightarrow \infty$, we have:
\begin{align*}
    \hat{\alpha}^{PLS} \rightarrow R(R^T\Sigma_{zz}R)^{-1}\Sigma_{zz}\alpha^\star \ \text{ in probability when} \ n \rightarrow \infty. 
\end{align*}
The assumptions $q = M$ and $\sigma_{zp} = \sum_{j=1}^M\gamma_jv_j$ imply that $\alpha^\star$ is contained in the space spanned by $R$. Consequently, $\Sigma_{zz}^{1/2}\alpha^\star$ is contained in the space spanned by $R^\star = \Sigma_{zz}^{1/2}R$. Therefore,
\begin{align*}
    R^\star(R^{\star^T}R^\star)^{-1} R^{\star^T}\Sigma_{zz}^{1/2}\alpha^\star = \Sigma_{zz}^{1/2}\alpha^\star,
\end{align*}
and 
\begin{align*}
    R(R^{\star^T}R^\star)^{-1} R^T\Sigma_{zz}\alpha^\star = \alpha^\star.
\end{align*}

Hence, $\hat{\alpha}^{PLS} \rightarrow \alpha^\star$. Equation \eqref{eq_brilling} implies that $\sigma_{zp} = k\Sigma_{zz}\alpha$, and therefore, $\hat{\alpha}^{PLS} \rightarrow \Sigma_{zz}^{-1}\sigma_{zp} = k\alpha$. $k$ is defined in \eqref{eq_brilling}. This proves that the PLS in the first layer is consistent. 

Now, consider the second layer $\ell = 2$. If $\hat{\alpha}^{(2)}$ is estimated by either OLS or PLS, \eqref{eq_brilling} directly imply that $\hat{\alpha}^{(2)} = k\alpha$. Next, to run OLS (or PLS) of $p$ on $\hat{p}^{(1)}$, we construct 
\begin{align*}
 \hat{p}^{(1)} = \max(z\hat{\alpha}^{PLS}, 0) = \max(zk\alpha^{(1)}, 0) = k\max(z\alpha^{(1)}, 0) = kz^{(1)}. 
\end{align*}
The model then becomes
\begin{align}
    \hat{p}^{(2)} = max(kz^{(1)}k_2\alpha^{(2)}, 0) = k\cdot k_2 max( z^{(1)}\alpha^{(2)}, 0) = k^{(2)}z^{(2)}.
\end{align}

By induction, we can consistently estimate $\alpha$ in each subsequent layer, up to $L$. See \cite{polson2021deep} for the detailed discussion of such a deep learning structure. 

Lastly, we note that the consistent estimator of the effect of the treatment on the outcome is defined as
\begin{align}
    \hat{\beta}^{GMM} =  \big((\bar{P}^{(L)})^T\bar{P}^{(L)}\big)^{-1}(\bar{P}^{(L)})^T\tilde{y},
\end{align}
where $\bar{P}^{(L)}$ is the augmented treatment $[\hat{p}, x]$ predicted by layer $L$. Moreover, the predicted outcome is $\hat{y} = \bar{P} \hat{\beta}^{GMM}$.

Define $\hat{e}^L$ as the predicted residuals from the treatment network in layer $L$ and consider, $\tilde{p} = [p, \hat{e}^L, x]$. Then, similarly for 2SRI: 
\begin{align}
    \hat{\beta}^{GMM} = (\tilde{p}^T\tilde{p})^{-1}\tilde{p}^Ty.
\end{align}
and $\hat{y} = \tilde{p}\hat{\beta}^{GMM}$.

\end{proof}

\subsection{Prediction and Bayesian Shrinkage}
One possible extension of deep partial least squares is a quantification of uncertainty in the density of the outcome. Our probabilistic model takes the form:
\begin{align*}
\tilde{y} \mid f,\bar{P} &  \sim p( \tilde{y} \mid f, \bar{P} ), \\
f & = g ( \bar{P} \beta_p ) + \varepsilon ,
\end{align*}
where $\tilde{y}$ is an $n \times 1$ rescaled and recentered outcome variable as before, $\bar{P} = [\hat{p}, x]$ is an $n \times (1+k)$ augmented (predicted) policy.  $\beta_p = [\beta, \beta_x]$ is a $(1+k) \times 1$ vector of coefficients in the outcome network. Here $g$ is a deep partial least squares method. To estimate parameters in the first layer, the method uses the SIMPLS algorithm (\citealp{de1993simpls}). Subsequent layers use the stochastic gradient descent (SGD) method (\citealp{bottou2012stochastic}) for optimizing and training the parameters.  

The key result, due to \cite{brillinger2012generalized} and \cite{naik2000partial}, is that $\beta_p$ can be estimated consistently, up to a constant of proportionality using
PLS, irrespective of the nonlinearity of $g$. Given a specification of $g$, the constant of proportionality can also be estimated consistently with $ \sqrt{n} $-asymptotics. It is worth noting that typically, standard SGD methods  will not yield asymptotically normally distributed parameters. However, they can substantially increase the precision of the coefficients of interest. 

Suppose that we wish to predict the outcome at a new level $ \bar{P}^\star $. Then, we can use the predictive distribution to make a forecast as well as provide uncertainty bounds:
$$
y_\star \sim p \left ( y \; \mid \;   g (  \bar{P}^\star\hat{\beta}_{p}^{DPLS} )  \right ) .
$$

The advantage of modeling a probabilistic model is the flexibility and  the possibility to incorporate uncertainty in the parameters of interest. We approximate the posterior distribution of $\hat{\beta}^{GMM}$ with its asymptotic distribution based on the Bernstein-Von Mises theorem (see e.g.,  \citealp{van2000asymptotic, bhadra2019lasso}). Define $\text{Data} = (\tilde{y}, p, z, x) $, then the densities $P(\hat{\beta}^{GMM}| \text{Data})$ and $P(\hat{\psi}_1)$ come from a normal distribution with the mean and variance depicted in \eqref{eq_asympt}. To shrink the effect of redundant instruments in the treatment network, we consider a Ridge regression estimator (\citealp{marquardt1975ridge}):
\begin{align}
    \hat{\alpha}^{Ridge} = (\bar{Z}^T\bar{Z} + \lambda I_m)^{-1}\bar{Z}^T p. 
\end{align}

However, as pointed out by a referee, this still underestimates the uncertainty due to the estimation of $ ( \hat{\beta}_{p}^{DPLS}, \hat{\alpha}^{Ridge} ) $. A fully Bayesian model with a non-uniform prior density of the coefficients can increase the precision of uncertainty bounds.

\section{Applications} \label{sect_applications}
In this section, we evaluate the predictive performance of the DPLS-IV method relative to benchmark methods. We generate synthetic data  to mimic the high-dimensional and sparse nature of instruments as described by \cite{kong2018deep}. To illustrate how well the method captures the complex nonlinear and sparse relation of covariates, our design mimics their setup. In addition to synthetic designs, this section illustrates findings on data provided by \cite{angrist1996children}. Throughout the experiments, we split data into two partitions. To find optimal parameters in each method, we use one sub-sample (further partitioned into train and validation data) and illustrate prediction performance measures on the other. We implement the experiments in R and the code to replicate our findings is available upon request.

\subsection{Sparse Uncorrelated Instruments}
The data-generating process is summarized by the following structural equation model:
\begin{align} \label{eq_simu}
   y &= f(p\beta + x\beta_x + \xi) + \varepsilon, \\
      p &= g(z\alpha + \text{sigmoid}(z^2)\gamma + x\alpha_x) + w, \nonumber \\  
    & w; u  \sim  \mathcal{N}(0, \Sigma) \nonumber, \nonumber \\
    & \varepsilon  \sim  \mathcal{N}(0, \sigma_\varepsilon^2) \nonumber, \nonumber \\
    z & \sim \mathcal{N}(0, \Sigma_z), \ x \sim N(0, \Sigma_x)     \nonumber, \\
    &\alpha; \alpha_x, \gamma, \beta, \beta_x \sim \mathcal{N}(0, 1).\nonumber
\end{align}
\eqref{eq_simu} holds for each observation $i = 1, \dots n$, where $n = 1000$. $w$ and $\xi$ are correlated and jointly normally distributed with a mean vector $0$ and a variance-covariance matrix $\Sigma = \begin{pmatrix} 3.000 & -0.087 \\ -0.087 & 0.010 \end{pmatrix}$. In this simulation setup, $g$ and $f$ are modeled by ReLU (or leakyReLu) (\citealp{dubey2019comparative}). In addition to a $n \times 50$ matrix of instruments $z$, the treatment $p$ contains nonlinear transformations of $z$ based on the sigmoid function. To introduce sparsity in the design, out of fifty instruments, ten are redundant. In particular, only forty instruments are relevant for predicting the treatment $p$. $x$ is an $n \times 25$ matrix of covariates, and 20 of them have no influence on outcome $y$. $\Sigma_z$ and $\Sigma_x$ represent covariance matrices of $z$ and $x$, respectively. In this setting, the covariance between the instruments is small (0.001) and the features $x$ are uncorrelated.

Figure \ref{fig_firststage} visualizes predictions of the policy $p$ in the treatment network  (Figure \ref{fig_secondstage} in Appendix \ref{app_leakyrelu} shows predicted outcome $y$).   In each case, $g$ and $f$ represent a rectified linear unit function.  A visual inspection of Figures  \ref{fig_firststage} and \ref{fig_secondstage} verifies that DPLS-IV results in more accurate predictions relative to the other methods. In this setting, DeepIV (\citealp{hartford2016counterfactual}) is the second-best alternative. According to Figures \ref{fig_firststage} and \ref{fig_secondstage}, predictions of the treatment, $\widehat{p}$, appear to have a higher variability compared to the outcome predictions. This is not surprising, as $p$ contains instruments in addition to $x$. Note that, even though prediction performance measures of DeepIV are close to those of DPLS-IV, Table \ref{tab_networks_lrelu} in Appendix \ref{app_leakyrelu} shows that, compared to DeepIV, DPLS-IV is more robust to changes in the activation function.

To evaluate the effectiveness of different methods, we examine their out-of-sample $R^2$ and root mean squared error (RMSE) as we increase the parameter values of $\Sigma$. The results, as depicted in Figure \ref{fig_cov_vs_prediction}, demonstrate that DPLS-IV exhibits robustness to increasing errors and is capable of accounting for the endogeneity of $p$ reflected in the covariance of error terms $w$ and $\xi$. Furthermore, our analysis, as shown in Table \ref{tab_networks}, indicates that DPLS-IV outperforms OLS, PLS, LASSO, and DeepIV methods by a significant margin. These findings suggest that DPLS-IV holds considerable promise as a powerful and reliable tool for predicting outcomes in the presence of endogeneity and measurement error.

\begin{figure}[H]
	\centering
	\subfloat[OLS]{\includegraphics[width=0.4\linewidth]{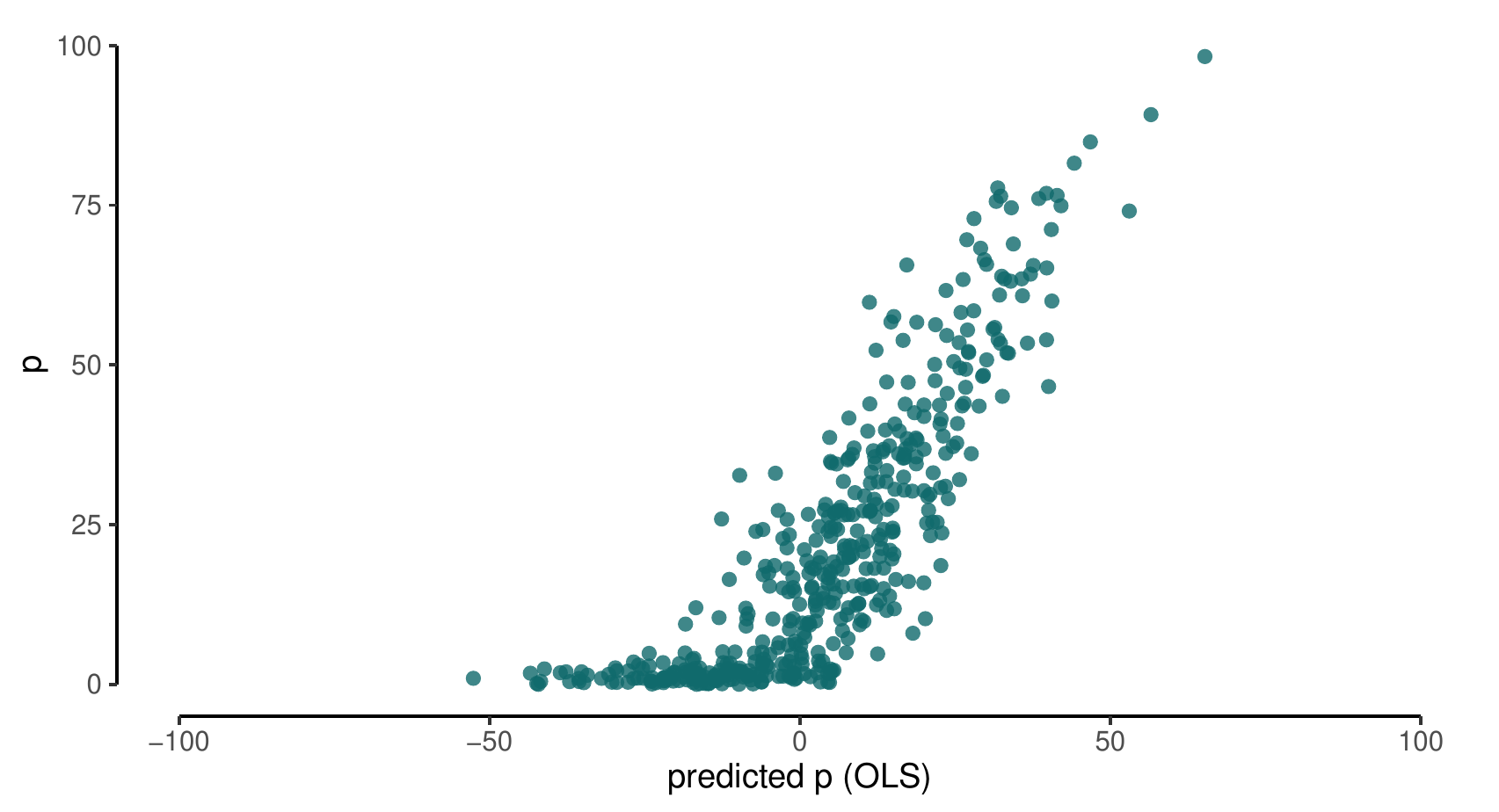}}\quad
	\subfloat[PLS]{\includegraphics[width=0.4\linewidth]{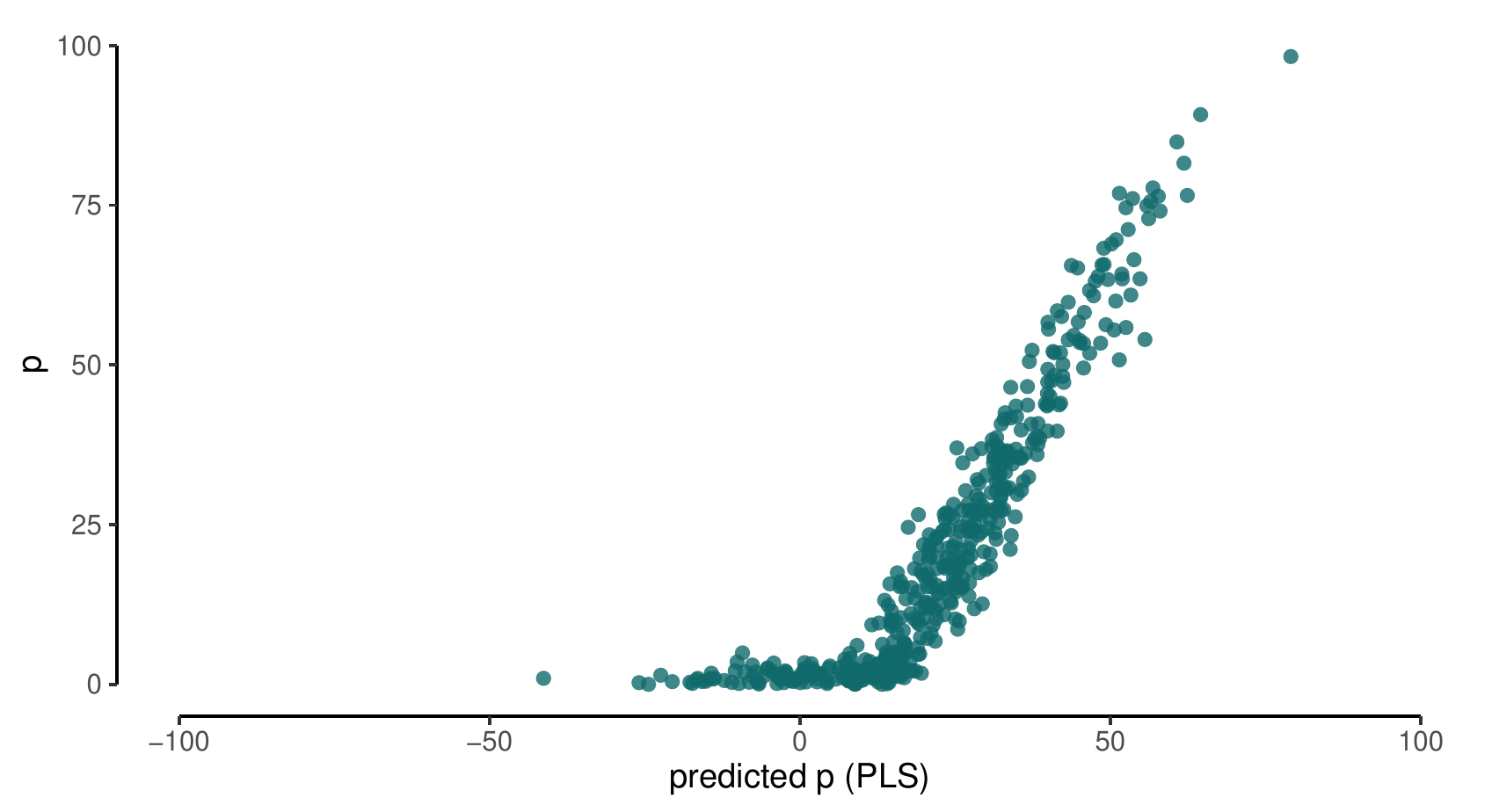}} \\
	\subfloat[DeepIV]{\includegraphics[width=0.4\linewidth]{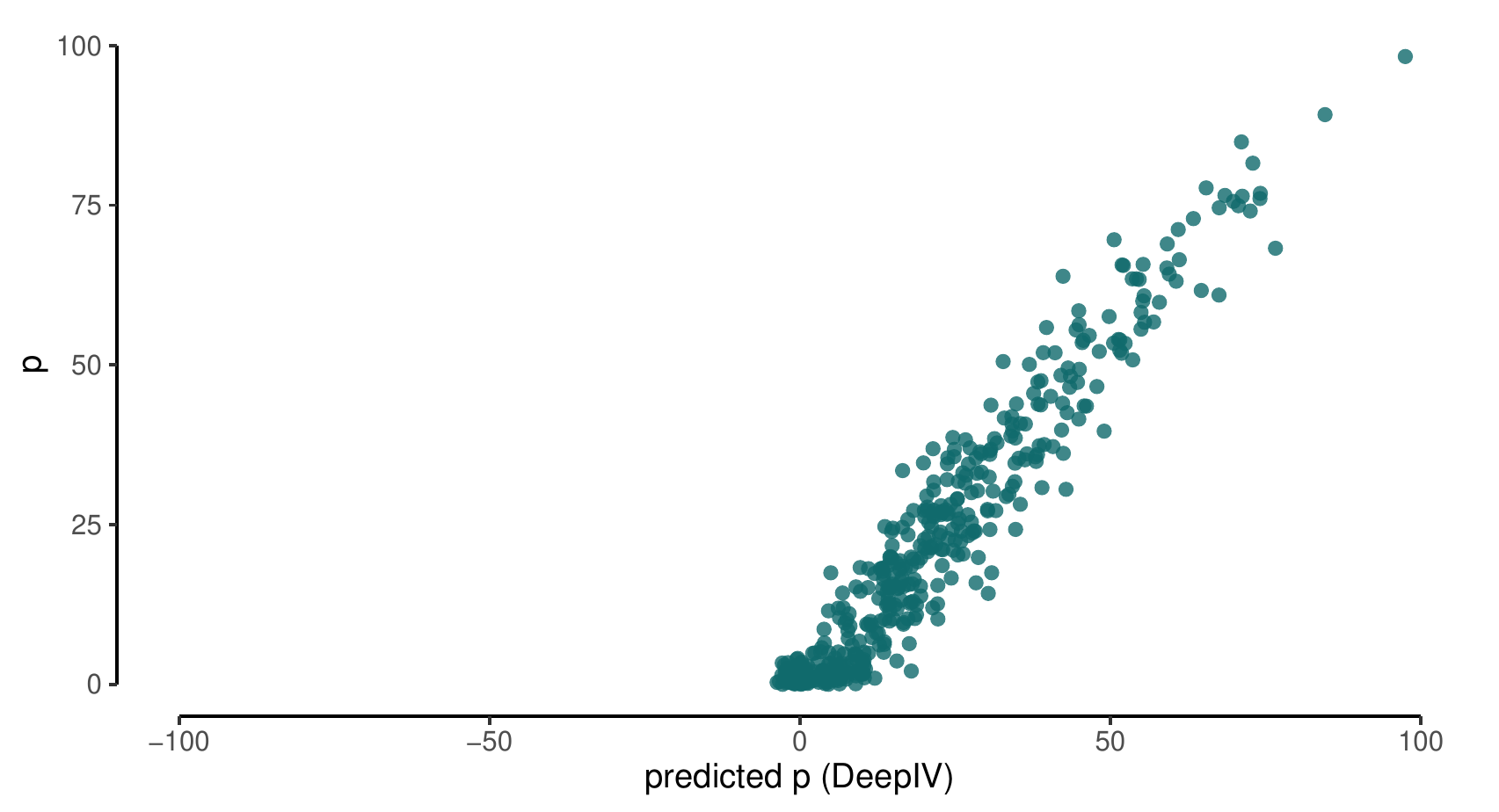}} 
	\subfloat[LASSO]{\includegraphics[width=0.4\linewidth]{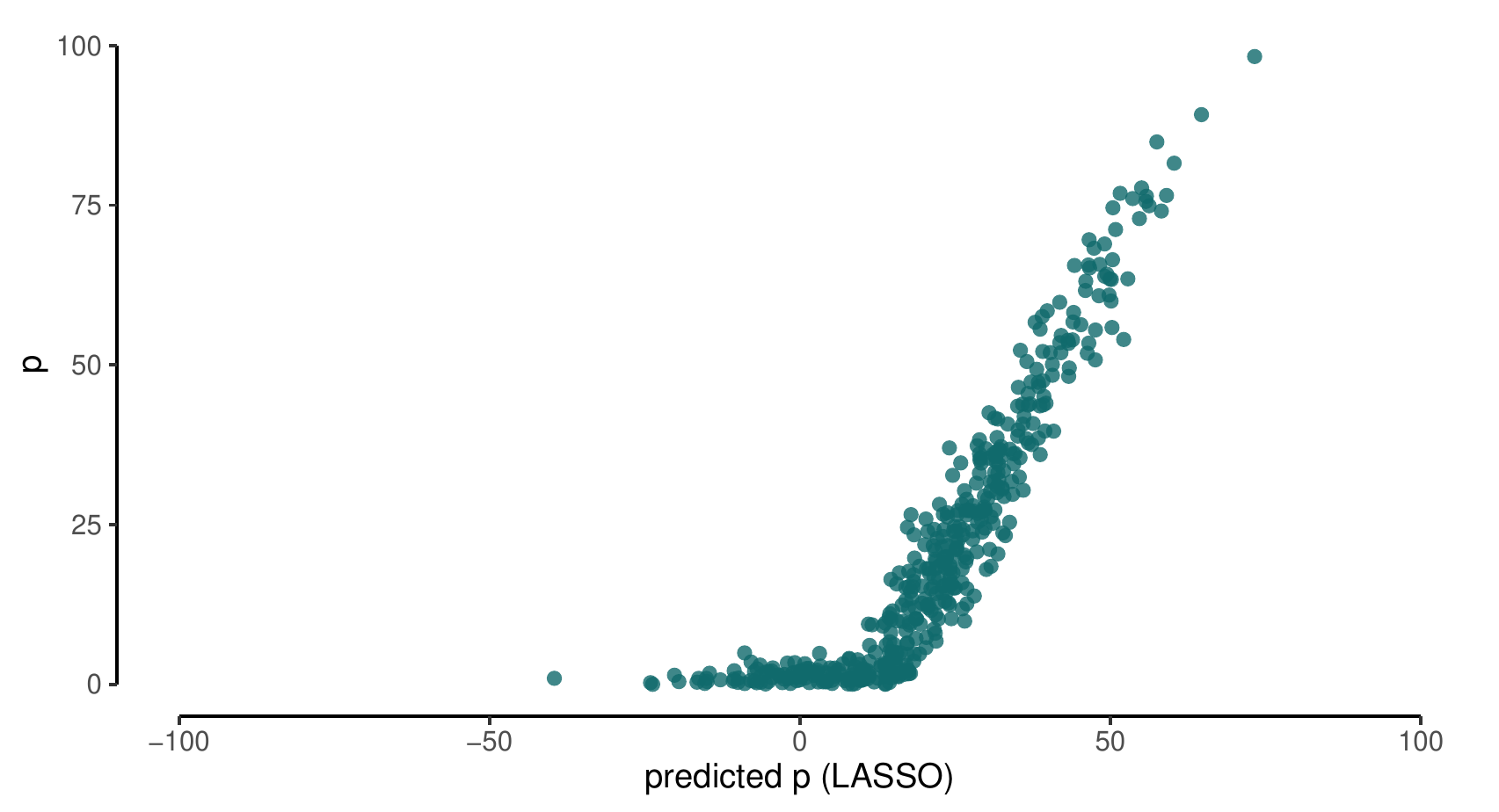}}\quad\\
	\subfloat[DPLS-IV]{\includegraphics[width=0.4\linewidth]{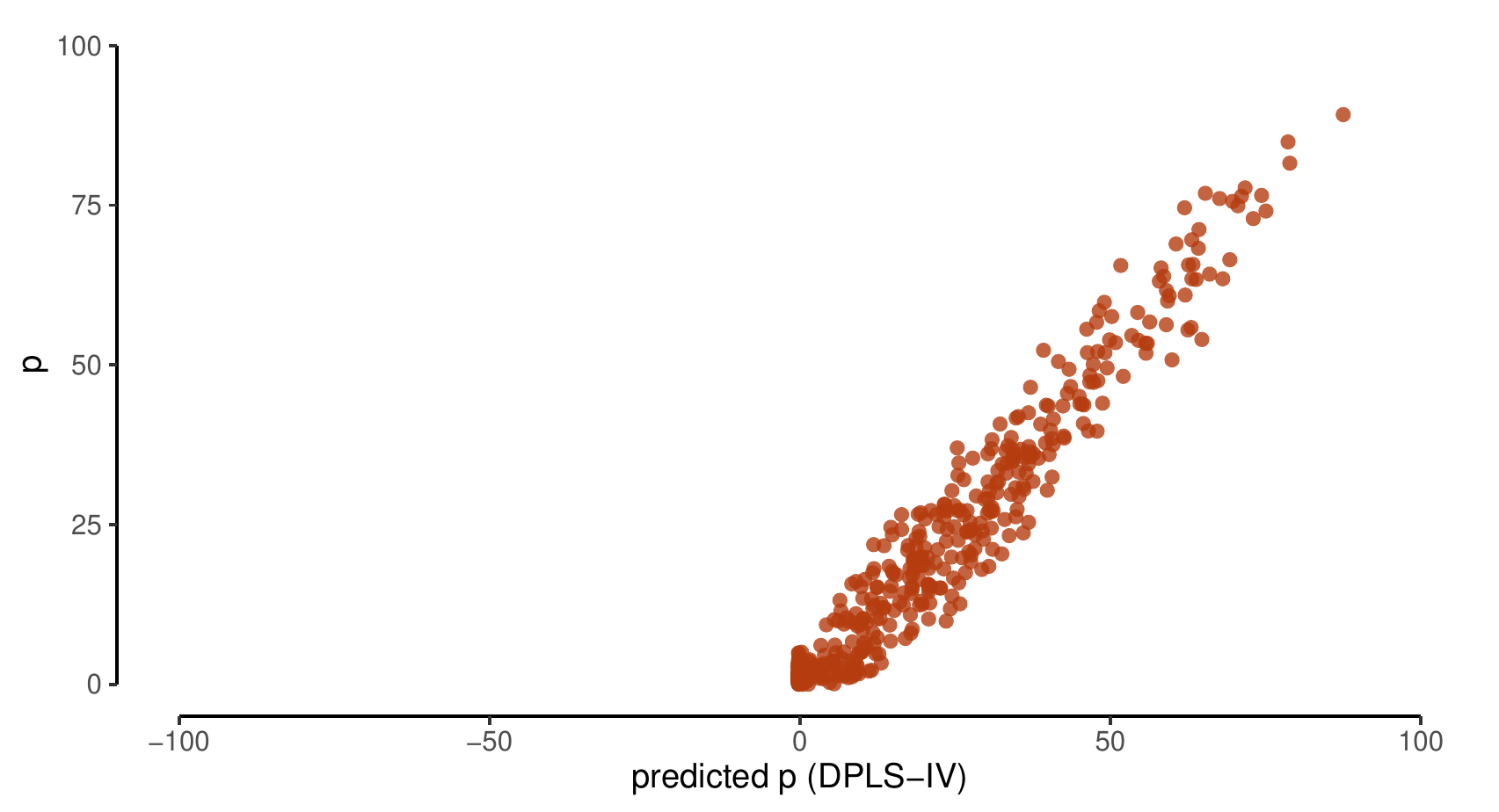}} \\
	\caption{First stage prediction performance. The X-axis depicts predicted treatment $\widehat{p}$, and the Y-axis represents true values of $p$. DPLS-IV denotes the method introduced in this study. We use test data for evaluating the methods. }\label{fig_firststage}
\end{figure}

To illustrate the predictive power of DPLS-IV, Figure \ref{fig_coeff_sim1}  compares coefficients estimated by the OLS and PLS methods in the treatment network. Parameters estimated by PLS are closer to their true values relative to OLS. Additionally, Figure \ref{fig_cdf_sim1} shows the absolute bias of these parameters. According to Figure \ref{fig_cdf_sim1}, the cumulative distribution function of the absolute bias of the parameters recovered by PLS stochastically dominates the ones based on OLS and LASSO. Specifically, smaller values of the absolute bias of the coefficients are more likely under PLS, relative to OLS and LASSO. The sum of the absolute bias is the smallest under PLS (14.406), followed by LASSO (15.233). OLS leads to the highest value of the sum of the absolute bias (21.610).   

\begin{table}[H] \centering 
    \resizebox{13cm}{!}{

\begin{tabular}{@{\extracolsep{5pt}}lccccccc} 
\\[-1.8ex]\hline 
\hline \\[-1.8ex] 
Treatment network \\
\hline \\[-1.8ex] 
Measures & \multicolumn{1}{c}{PLS} & \multicolumn{1}{c}{OLS}  &  \multicolumn{1}{c}{DeepIV} &\multicolumn{1}{c}{LASSO}  & \multicolumn{1}{c}{DPLS-IV} \\ 
\hline \\[-1.8ex] 
$R^2$ & 0.751 & 0.688  & 0.932 & 0.753 &\textbf{0.956} & \\ 
RMSE & 10.603 & 21.540  & 5.789 & 10.497 & \bf{4.508} & \\ 
\hline \\[-1.8ex] 
Outcome network \\
\hline \\[-1.8ex] 
$R^2$ & 0.932 & 0.933  & 0.889 & 0.933 & \textbf{0.938}  \\ 
RMSE & 1.668 & 1.718  & 4.573 & 1.670 & \bf{1.622}  \\ 
\hline \\[-1.8ex] 
\end{tabular}}
\caption{Prediction performance of DPLS-IV relative to other methods. We present out-of-sample $R^2$ and RMSE. To present the maximum prediction performance of OLS, PLS and LASSO in the outcome network, we use residuals predicted by DPLS-IV in the first stage.}\label{tab_networks} 
\end{table}

\begin{figure}[H]
\centering
\includegraphics[width=0.6\linewidth]{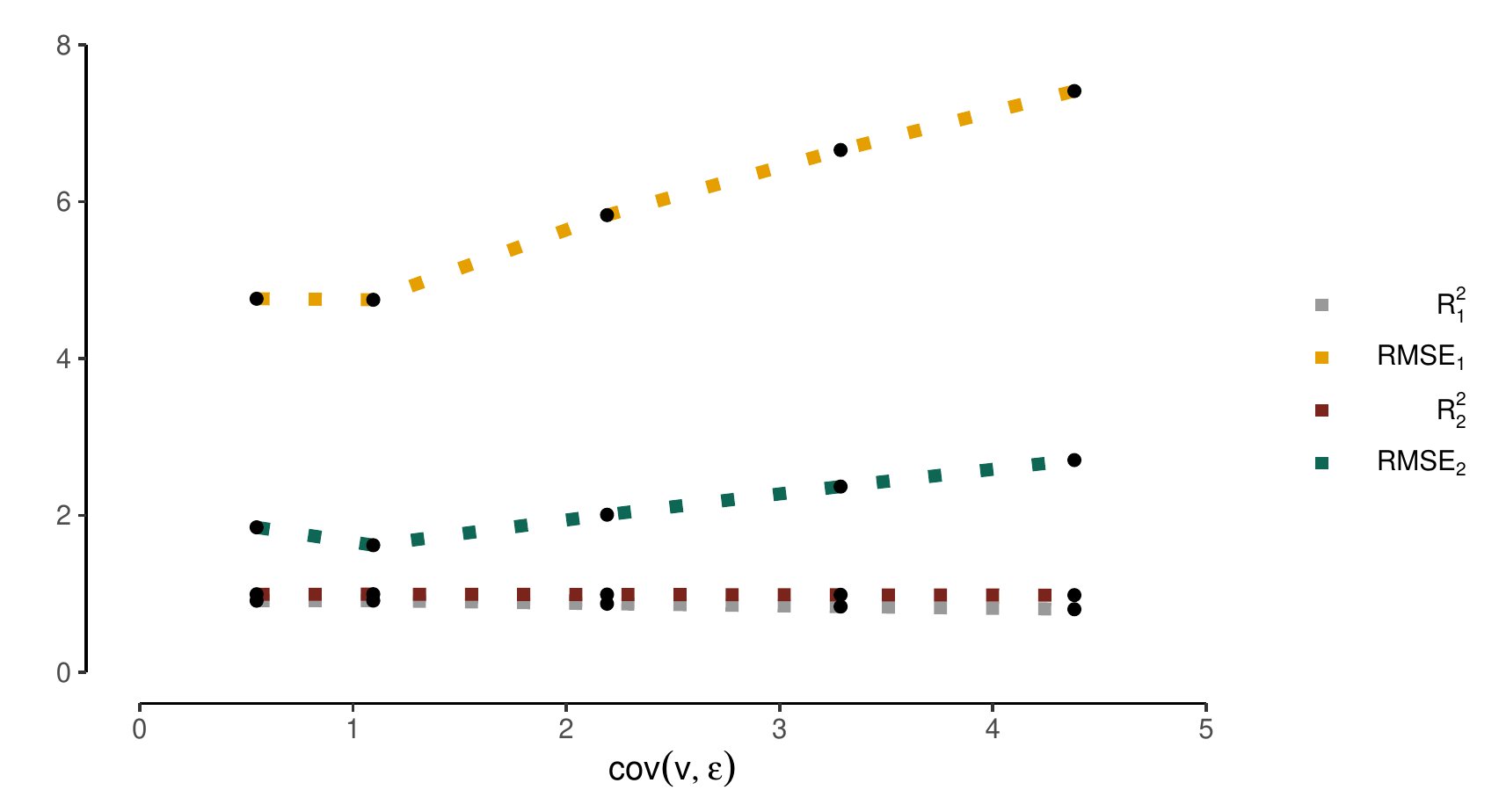}
\caption{$\text{R}^2$ and RMSE for increasing values of $\text{cov}(v, \varepsilon)$. The black circles represent the estimated values which are combined with the dotted lines. Each color corresponds to the corresponding $\text{R}^2$ and RMSE in the treatment and outcome networks. Specifically,   $\text{R}^2_1$ and $\text{RMSE}_1$ denote prediction performance measures in a treatment network. $R^2_2$ and $\text{RMSE}_2$ are the prediction performance measures in the outcome network.} \label{fig_cov_vs_prediction}
\end{figure}

\begin{figure}[H]
	\centering
	\subfloat[OLS]{\includegraphics[width=0.4\linewidth]{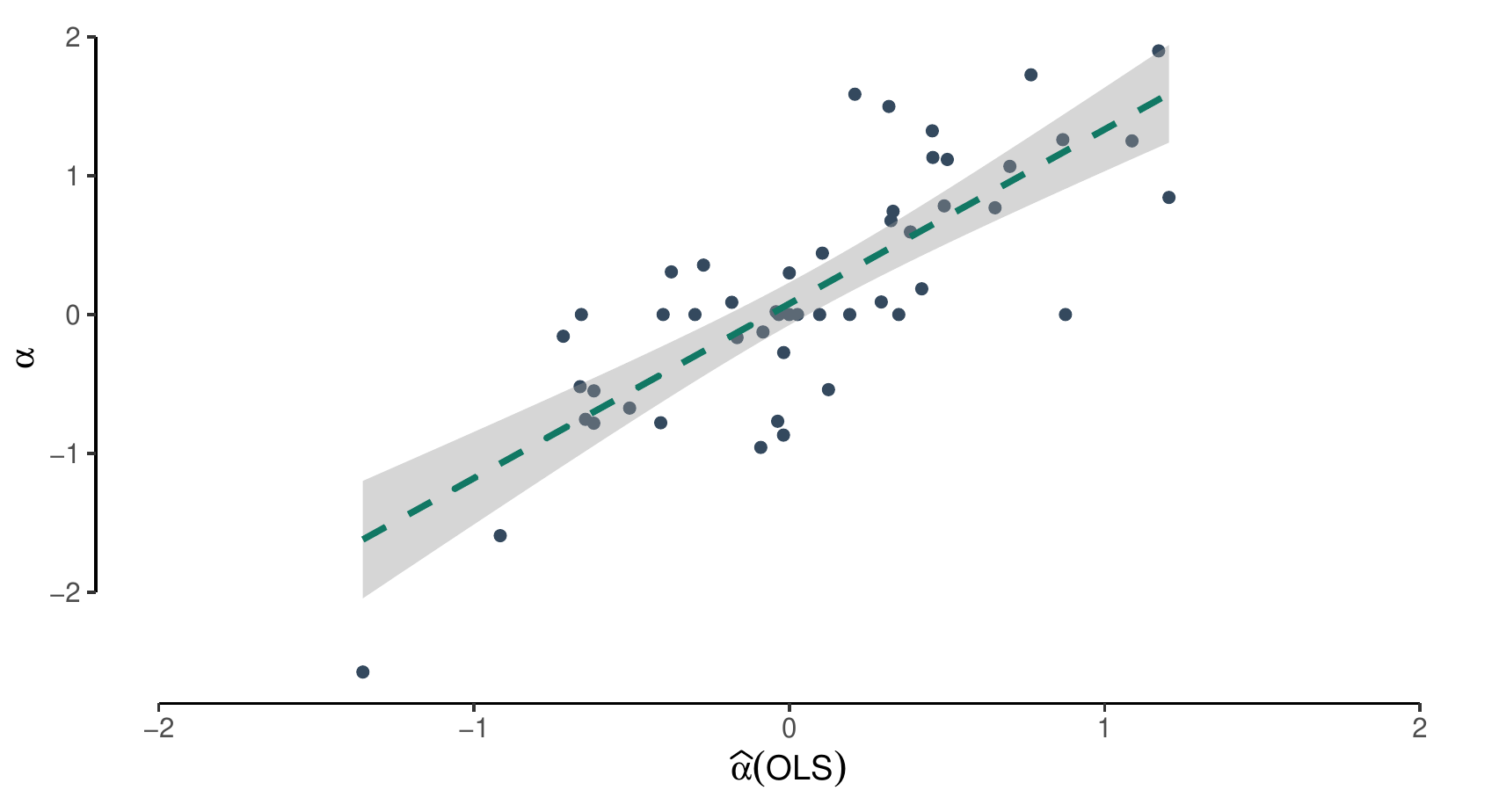}}\quad
	\subfloat[PLS]{\includegraphics[width=0.4\linewidth]{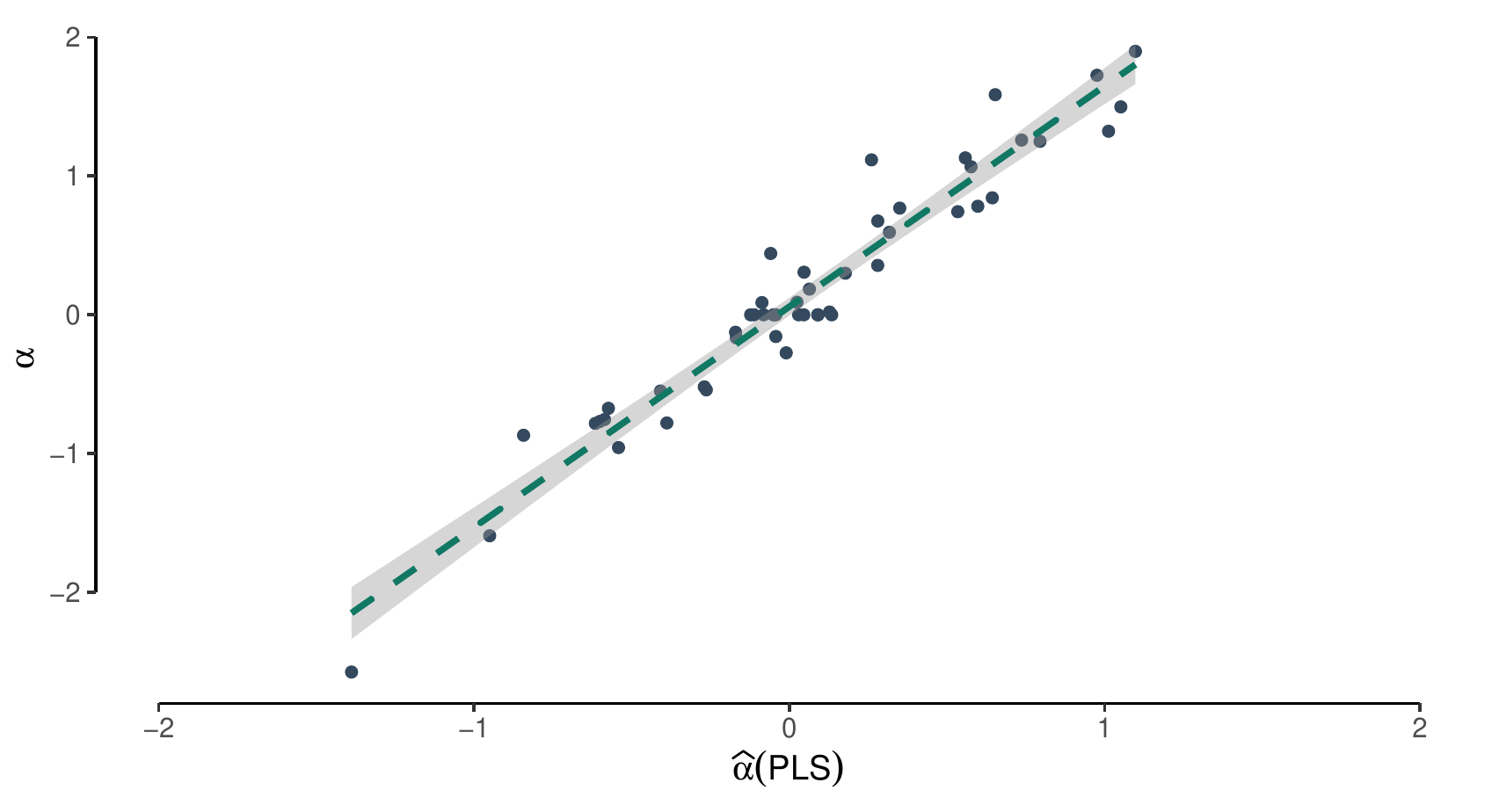}}\\	
	\caption{Estimated coefficients and their corresponding true values in the treatment network. The X-axis shows coefficients predicted by OLS (left) and PLS (right), and the Y-axis depicts the corresponding true values. }\label{fig_coeff_sim1}
\end{figure}

\begin{figure}[H]
	\centering
	\subfloat[]{\includegraphics[width=0.5\linewidth]{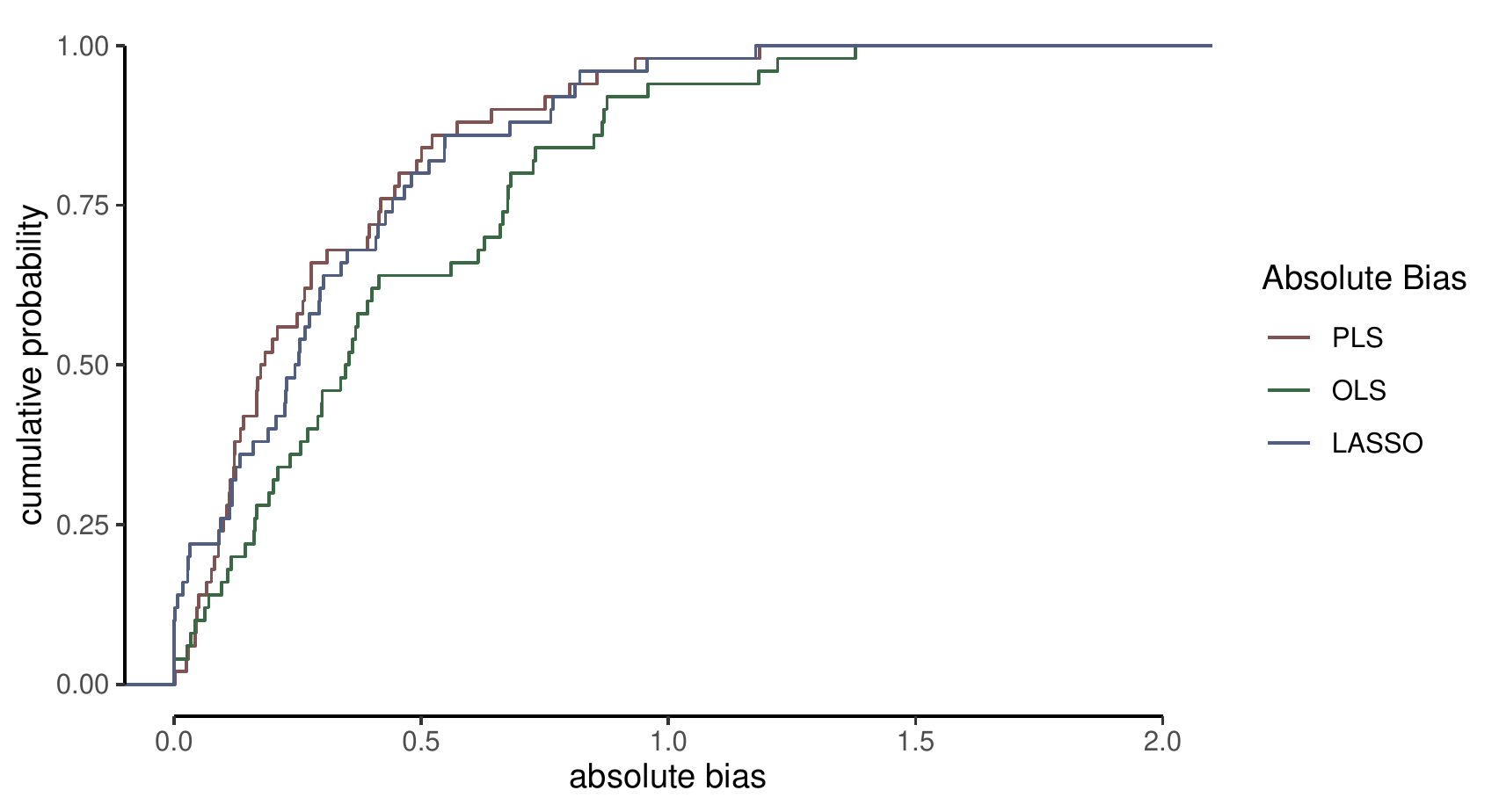}} \\
	\caption{Empirical cumulative distribution functions (CDF) of the absolute bias of the parameters estimated by PLS, OLS and LASSO. }\label{fig_cdf_sim1}
\end{figure}

To introduce uncertainty in the parameters of interest, we extend DPLS-IV to a Bayesian setup and compare it to a classical Bayesian IV approach. 
\begin{table}[H] \centering 
\begin{tabular}{@{\extracolsep{5pt}}lccccccc} 
\\[-1.8ex]\hline 
\hline \\[-1.8ex] 
Outcome network \\
\hline \\[-1.8ex] 
Measures & \multicolumn{1}{c}{Bayesian IV} & \multicolumn{1}{c}{Bayesian DPLS-IV}  \\ 
\hline \\[-1.8ex] 
$R^2$ & 0.778  & \textbf{0.838}  \\ 
RMSE & 4.375  & \bf{3.092}  \\ 
\hline \\[-1.8ex] 
\end{tabular}
\caption{Prediction performance of Bayesian DPLS-IV and IV methods. The measures are computed based on the mean prediction out of 10,000 predicted outcome variables.}\label{tab_bayesian} 
\end{table}
Table \ref{tab_bayesian} shows that Bayesian DPLS-IV (with ReLU as an activation function and two hidden layers) significantly outperforms its' linear counterpart. Figure \ref{fig_bayesian_sim1} verifies that the Bayesian DPLS-IV closely replicates the density of the original outcome variable. 

\begin{figure}[H]
	\centering
	\subfloat[]{\includegraphics[width=\linewidth]{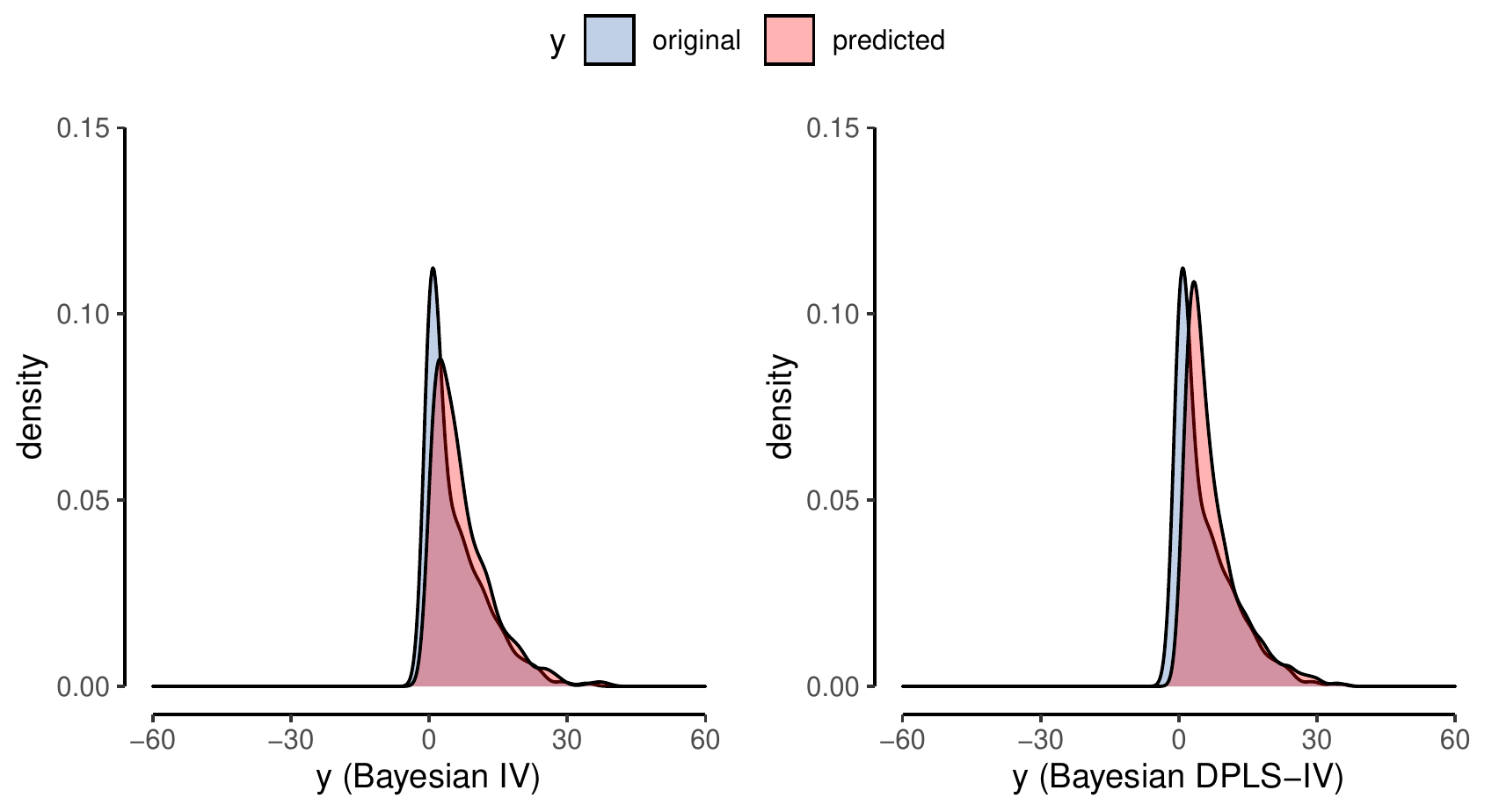}} \\
	\caption{The density of the original and predicted outcome variables (on the test data). }\label{fig_bayesian_sim1}
\end{figure}

\subsection{Instruments with the Network Structure}
The second simulation closely follows the network data structure described by \cite{kong2018deep}. The data-generating process is the same as shown in  \eqref{eq_simu}. However, the instrumental variable network comes from the preferential attachment algorithm (\citealp{jeong2003measuring}). Each node of the network represents one feature. The resulting network follows a power-law degree distribution, and thus, is scale-free. That means, only a few instruments in the network have a relatively large number of "neighbors". The distance between two instruments is the shortest path between them in the network. We calculate a $p \times p$ ($p = 50$) pairwise distance matrix $D$. Next, this distance matrix is transformed into a covariance matrix $\Sigma_{z, (i,j)} = 0.7^{D_{(i, j)}}$, where $(i, j)$ represents the element in each row $i$ and column $j$ of a matrix $D$ ($i, j = 1, \dots, p$). 
Figure \ref{fig_network_ex} shows an example of such a network with 100 nodes. 
\begin{figure}[H]
	\centering
	\subfloat[]{\includegraphics[width=0.8\linewidth]{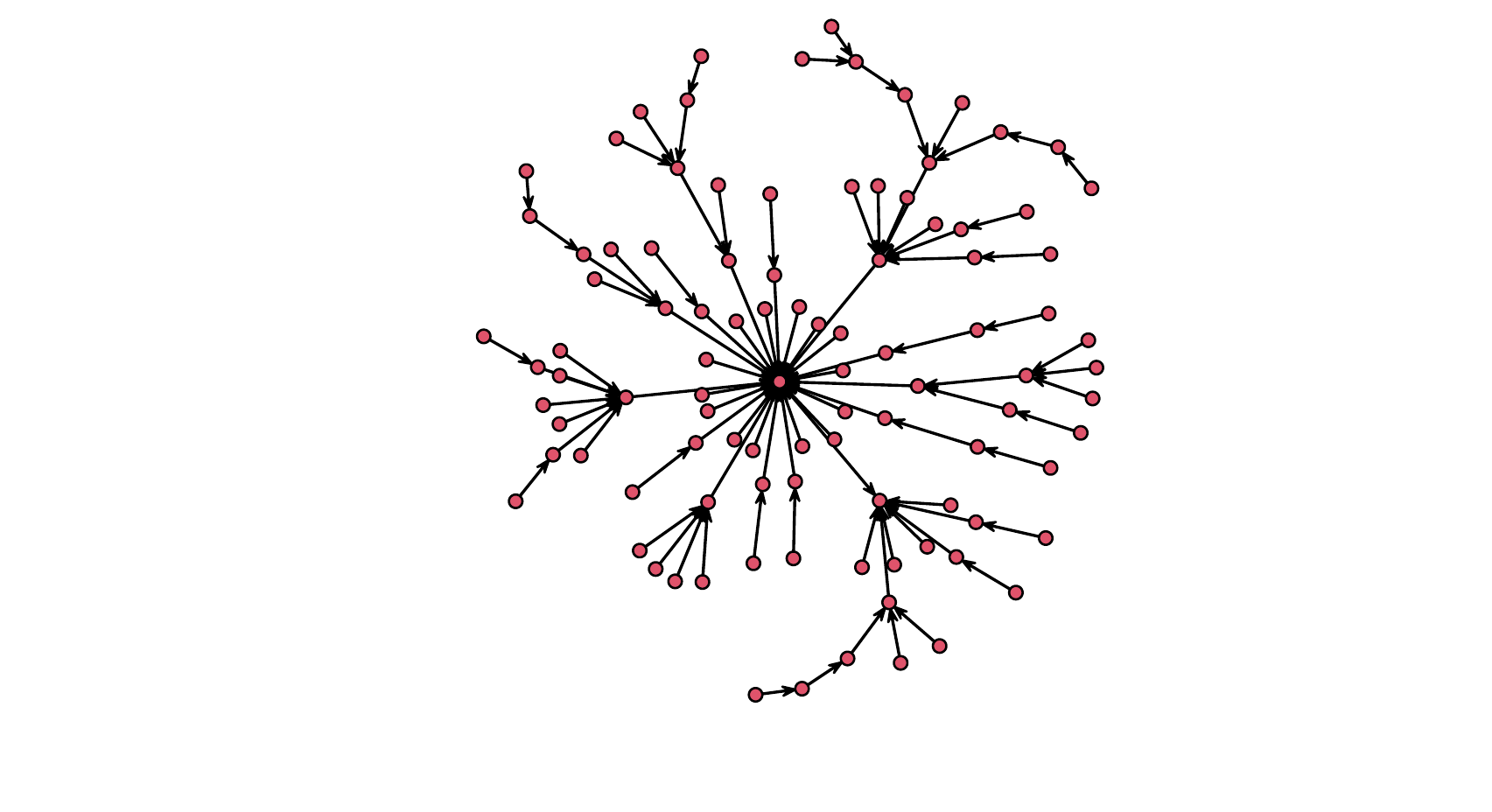}} \\
	\caption{A network with 100 nodes. }\label{fig_network_ex}
\end{figure}

\begin{figure}[H]
	\centering
	\subfloat[]{\includegraphics[width=0.95\linewidth]{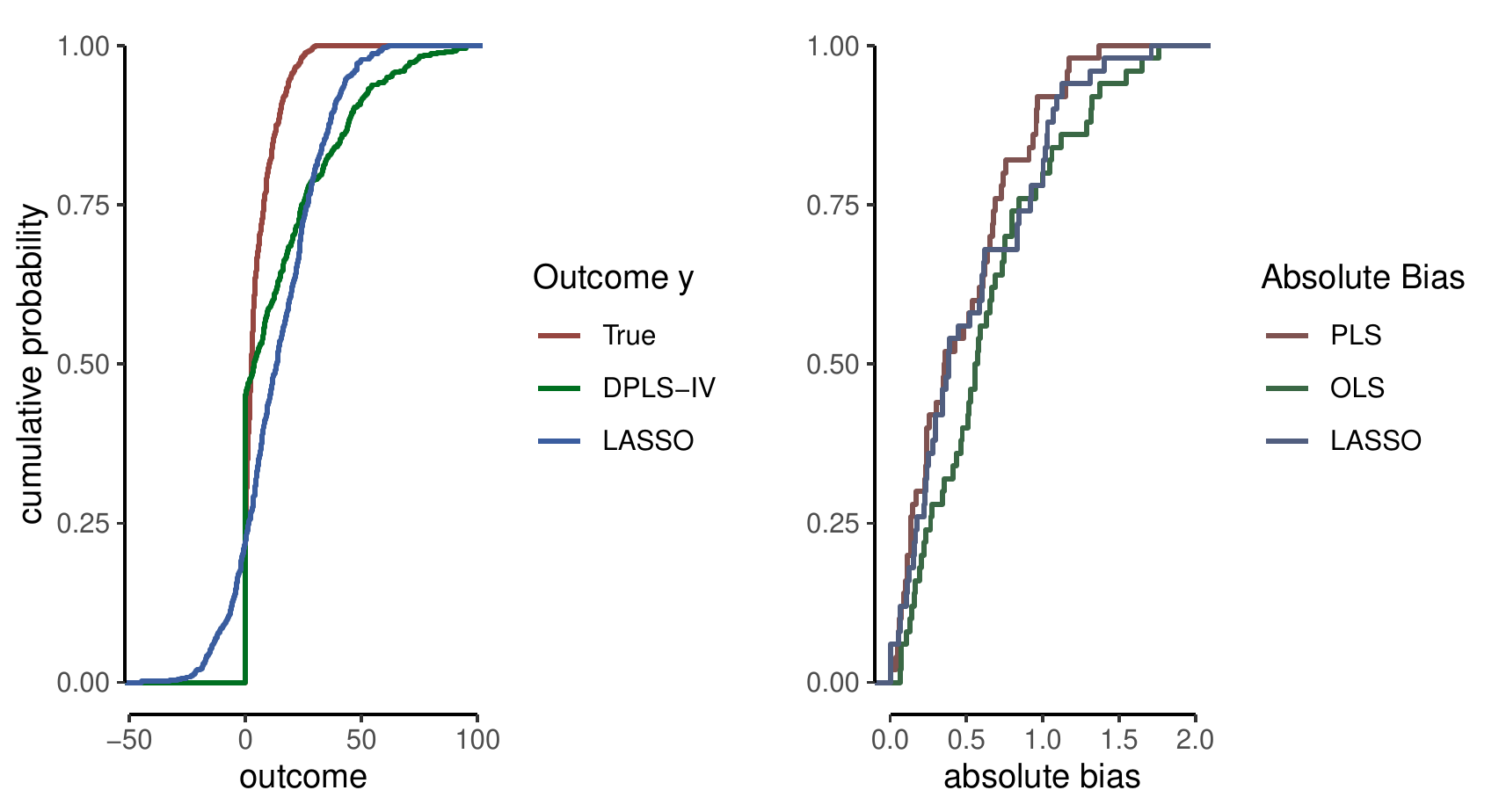}} \\
	\caption{Left - CDF of the outcome variable; Right - CDF of the absolute value of the bias of the parameters (estimated by the corresponding method) in the treatment network.}\label{fig_cdfs_network}
\end{figure}

For comparability, DPLS-IV and DeepIV consist of the same number of layers and neurons in each layer. In particular, the input layer consists of 200 neurons; the hidden layers consist of 200, 100, and 50 neurons, respectively. We use ReLU as an activation function. Figure \ref{fig_cdfs_network} shows cumulative distribution functions of the predicted outcome variable and the absolute bias of the estimated coefficients (right) in the treatment network. Based on the results in Figure \ref{fig_cdfs_network}, the outcome predicted by DPLS-IV is closer to the CDF of the simulated outcome variable. Moreover, DPLS-IV yields a CDF of the absolute bias of the parameters that stochastically dominates the CDF of the other benchmark methods. 

We also investigate prediction performance in the outcome network. Table \ref{tab_networks_pref} shows $R^2$ and RMSE of DPLS-IV relative to other methods. In this setting, DPLS-IV outperforms other benchmark methods. 

\begin{table}[H] \centering 
\begin{tabular}{@{\extracolsep{5pt}}lccccccc} 
\\[-1.8ex]\hline 
\hline \\[-1.8ex] 
Outcome network \\
\hline \\[-1.8ex] 
Measures & \multicolumn{1}{c}{PLS} & \multicolumn{1}{c}{OLS} &  \multicolumn{1}{c}{DeepIV} & \multicolumn{1}{c}{LASSO}  & \multicolumn{1}{c}{DPLS-IV} \\ 
\hline \\[-1.8ex] 
$R^2$ & 0.940 & 0.930 & 0.912 & 0.942 & \textbf{0.944} & \\ 
RMSE & 1.631 & 1.846  & 2.108  & 1.609 & \textbf{1.585} & \\ 
\hline \\[-1.8ex] 
\end{tabular}
\caption{Prediction performance of DPLS-IV relative to other methods. We present out-of-sample $R^2$ and RMSE. To present the maximum prediction performance of OLS, PLS, and LASSO in the outcome network, we use residuals predicted by DPLS-IV in the first stage.}\label{tab_networks_pref} 
\end{table}

\begin{figure}[H]
	\centering
	\subfloat[]{\includegraphics[width=0.7\linewidth]{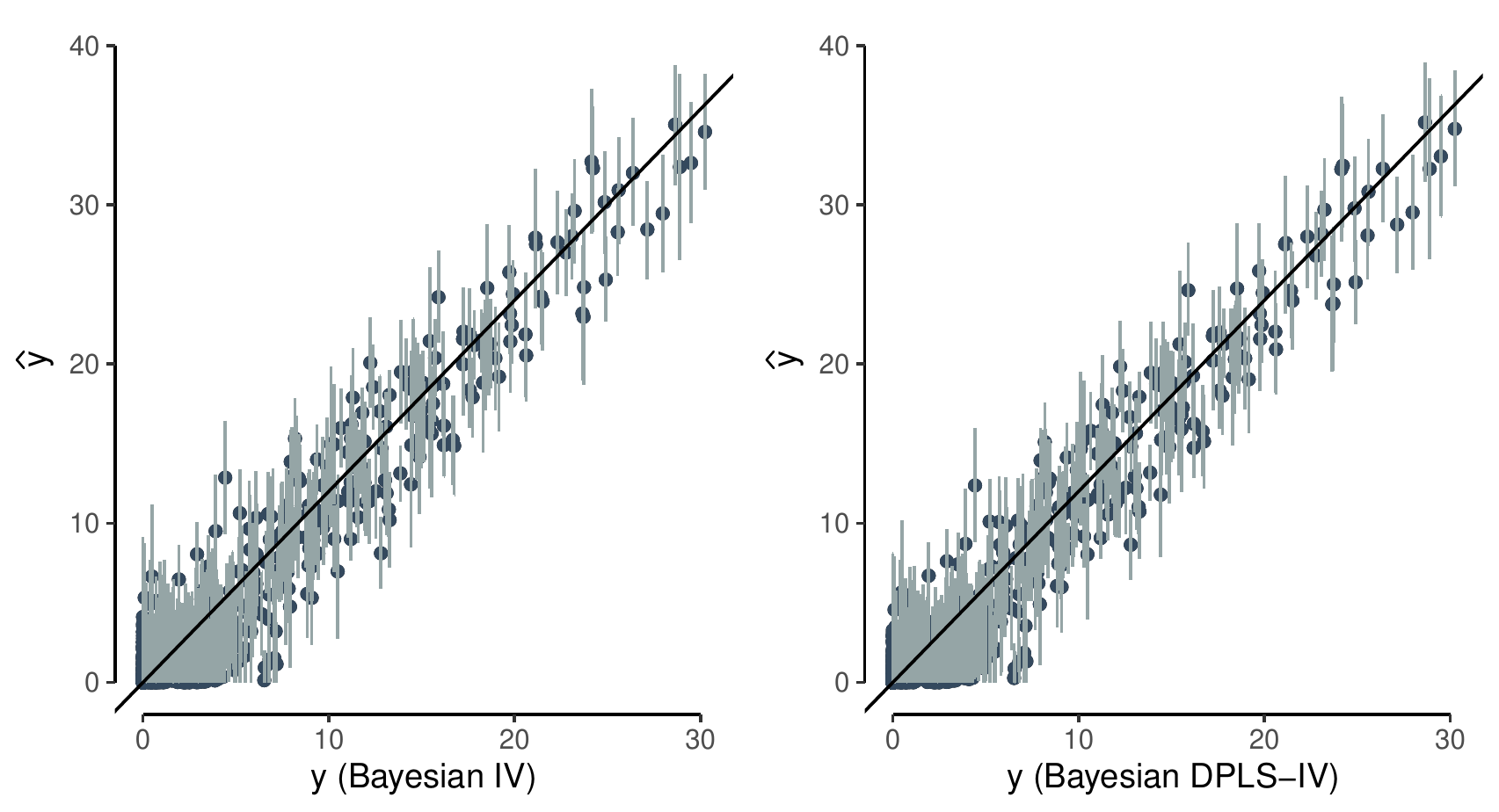}} \\
	\caption{Predicted outcome $y$ and the corresponding 95\% confidence intervals.}\label{fig_bayesian_network}
\end{figure}

Additionally, we compare a Bayesian DPLS-IV to a Bayesian IV approach. Bayesian DPLS-IV consists of two hidden layers and a ReLU activation function. Note that, the RMSE based on the mean predicted outcome out of 10,000 predictions increases (2.262) relative to DPLS-IV.  However, $R^2$ is unchanged (0.941). Figure \ref{fig_bayesian_network} shows predicted outcome values against their true counterparts. Bayesian DPLS-IV leads to outcome predictions that are closer to the true values.

\subsection{Labor Supply of Women}
% \url{https://cran.r-project.org/web/packages/ivmte/vignettes/ivmte.html}
% Do pls and then DeepIV on top of it. 
\cite{angrist1996children} examine the effect of childbearing on women's labor supply. They use a mixed sibling-sex composition and twins as instruments for the size of the family. To illustrate the method, our analysis uses 1980 U.S. Census data that include all women with two or more children. The model that we are going to estimate is defined by the following structural equation model:
\begin{align}
   y &= f(\text{kids}\cdot\beta + x\beta_x + \xi) + \varepsilon. \\
      \text{kids} &= g(\text{twins}\cdot \alpha + \text{twins}\cdot x\gamma + x\alpha_x)  + w, 
  \nonumber
\end{align}
where $y$ is the outcome variable and measures the logarithm of hours worked per week by the mother. The outcome is observed when the age of the mother is more than the average age of the mothers in the population. In particular, 
\begin{align}
    y =  \text{log(hourswm)} \times \mathbb{I}(agem > \mathbb{E}(agem)),
\end{align}
where $\mathbb{I}(agem > \mathbb{E}(agem))$ is an indicator variable and equals one if the age of the mother is more than the population mean, and zero otherwise. The treatment, $\text{kids}$, is the number of total kids in a family. $\text{twins}$ represents an instrumental variable and equals one if the second and third children are twins, otherwise zero. Additionally, we use interactions of the instrument with covariates, $ \text{twins}\cdot x$ as instruments for the number of kids. The covariates include the gender and age of the first and second child, the mother's age, marital status, race, education, and the age of the mother when she first gave birth. See \cite{angrist1996children} for a detailed summary of the variables. 

The prediction performance advantage of DPLS-IV is also clearly evident in Table \ref{tab_angrist}. The first stage prediction performance of  DPLS-IV  is similar in each method, however, $R^2$ (RMSE) is considerably high (low) in the outcome network. It is worth noting that the measures become substantially worse in the outcome network compared to the ones in the treatment network. One of the potential reasons is that the outcome distribution is bimodal. Figure \ref{fig_bayesian_ae} presents the original density of the outcome variable and the density of the outcome predicted by Bayesian DPLS-IV.  Figure \ref{fig_bayesian_ae} shows that Bayesian  DPLS-IV  successfully captures the bimodal nature of the outcome.

\begin{table}[H] \centering 
\begin{tabular}{@{\extracolsep{5pt}}lccccccc} 
\\[-1.8ex]\hline 
\hline \\[-1.8ex] 
Treatment network \\
\hline \\[-1.8ex] 
Measures & \multicolumn{1}{c}{PLS} & \multicolumn{1}{c}{OLS} &  \multicolumn{1}{c}{DeepIV} &\multicolumn{1}{c}{LASSO} &   \multicolumn{1}{c}{DPLS-IV}\\ 
\hline \\[-1.8ex] 
$R^2$ & 0.204 & 0.205  & 0.226 & 0.205 & \textbf{0.233} \\ 
RMSE & 0.655 & 0.655  & \textbf{0.647} & \textbf{0.647} &  \textbf{0.644} \\
\hline \\[-1.8ex] 
Outcome network \\
\hline \\[-1.8ex] 
$R^2$ & 0.532 & 0.536  & \textbf{0.771} & 0.536 & \bf{0.772} \\ 
RMSE & 12.838 & 12.782  & 8.980 & 12.780 & \bf{8.960} \\ 
\hline \\[-1.8ex] 
\end{tabular}\caption{Prediction performance of DPLS-IV relative to other methods. We present out-of-sample $R^2$ and RMSE. To present the maximum prediction performance of OLS, PLS  and LASSO in the outcome network, we use residuals predicted by DPLS-IV in the first stage.}\label{tab_angrist} 
\end{table}
We also investigate the layer-by-layer transformation of the proposed method. Figure \ref{fig_tsne} illustrates the feature representation of the original outcome $y$ (on the test data),  the score matrix ($T$), and the final feed-forward neural network layer in DPLS-IV. We consider the projection of features on two distinct classes of the outcome variable. In particular, when $y > 0$, we label it as a class 1, and when $y = 0$, it represents a class 0.  Figure \ref{fig_tsne} shows that the intermediate layers in DPLS-IV significantly improve the representation of the original covariate space. The borders of the two classes become highly evident in the last layer of DPLS-IV. 

\begin{figure}[H]
	\centering
	\subfloat[Original]{\includegraphics[width=0.4\linewidth]{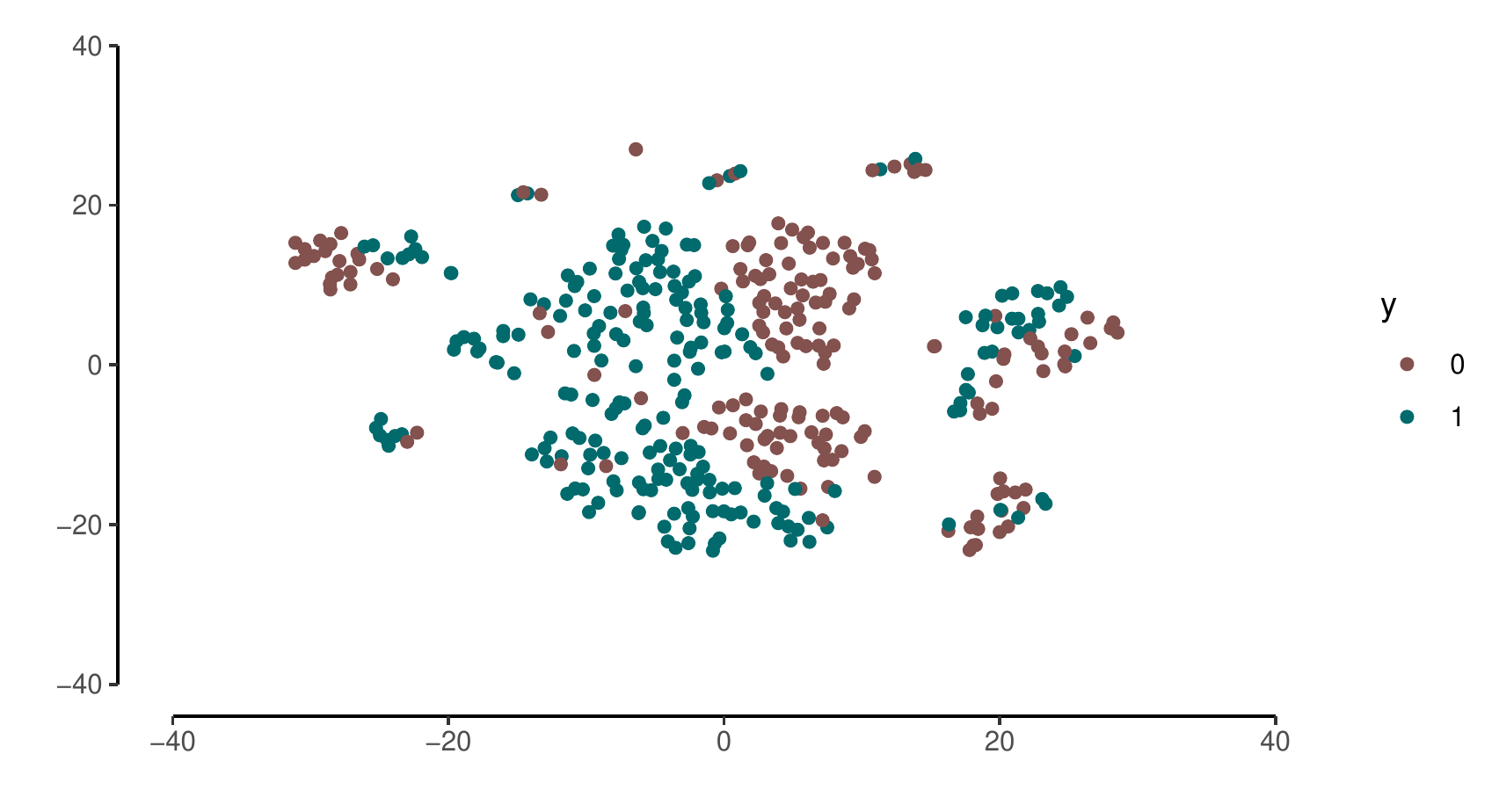}}\quad
	\subfloat[PLS Scores T]{\includegraphics[width=0.4\linewidth]{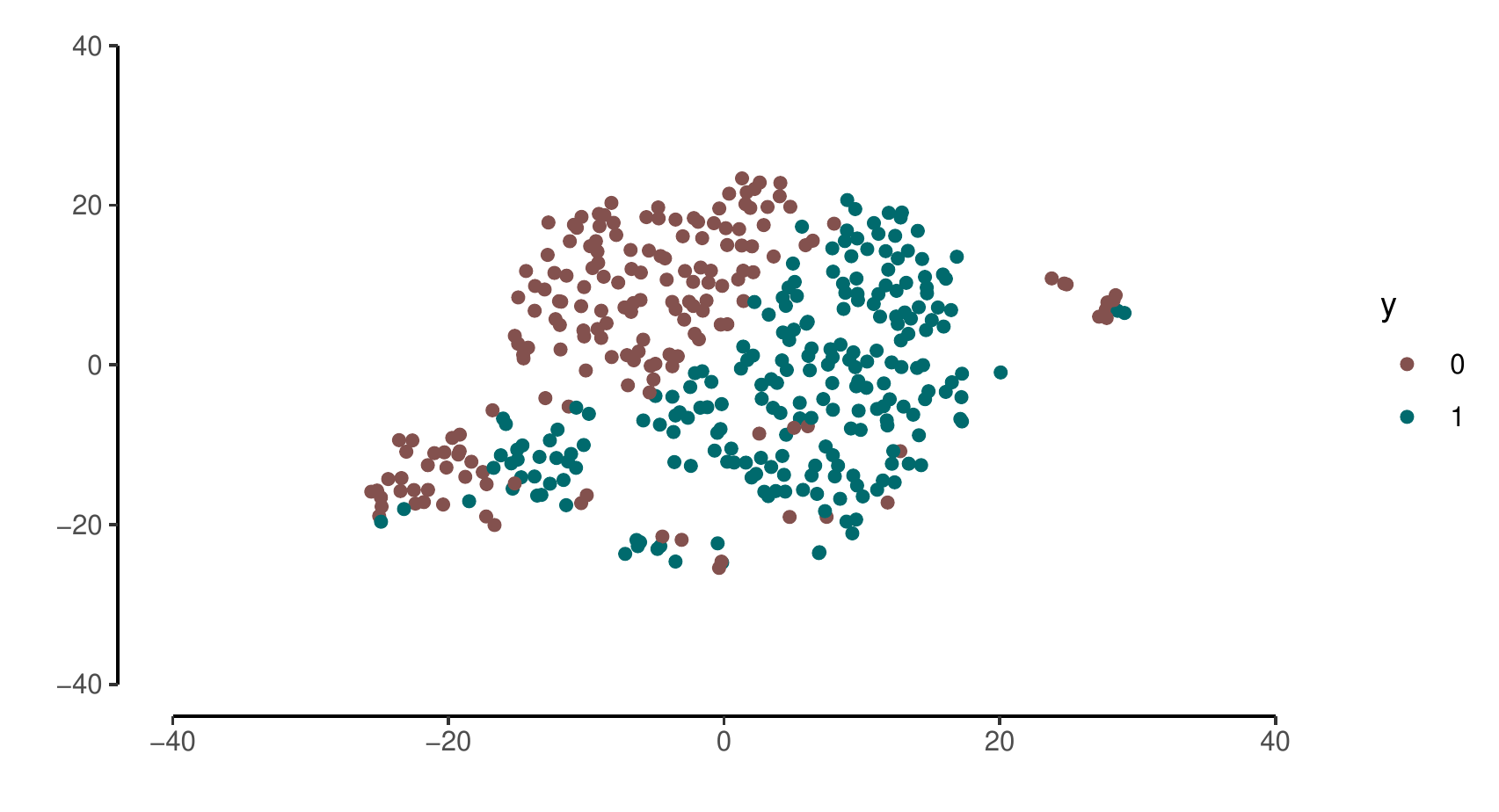}} \\
	\subfloat[Final Layer]{\includegraphics[width=0.4\linewidth]{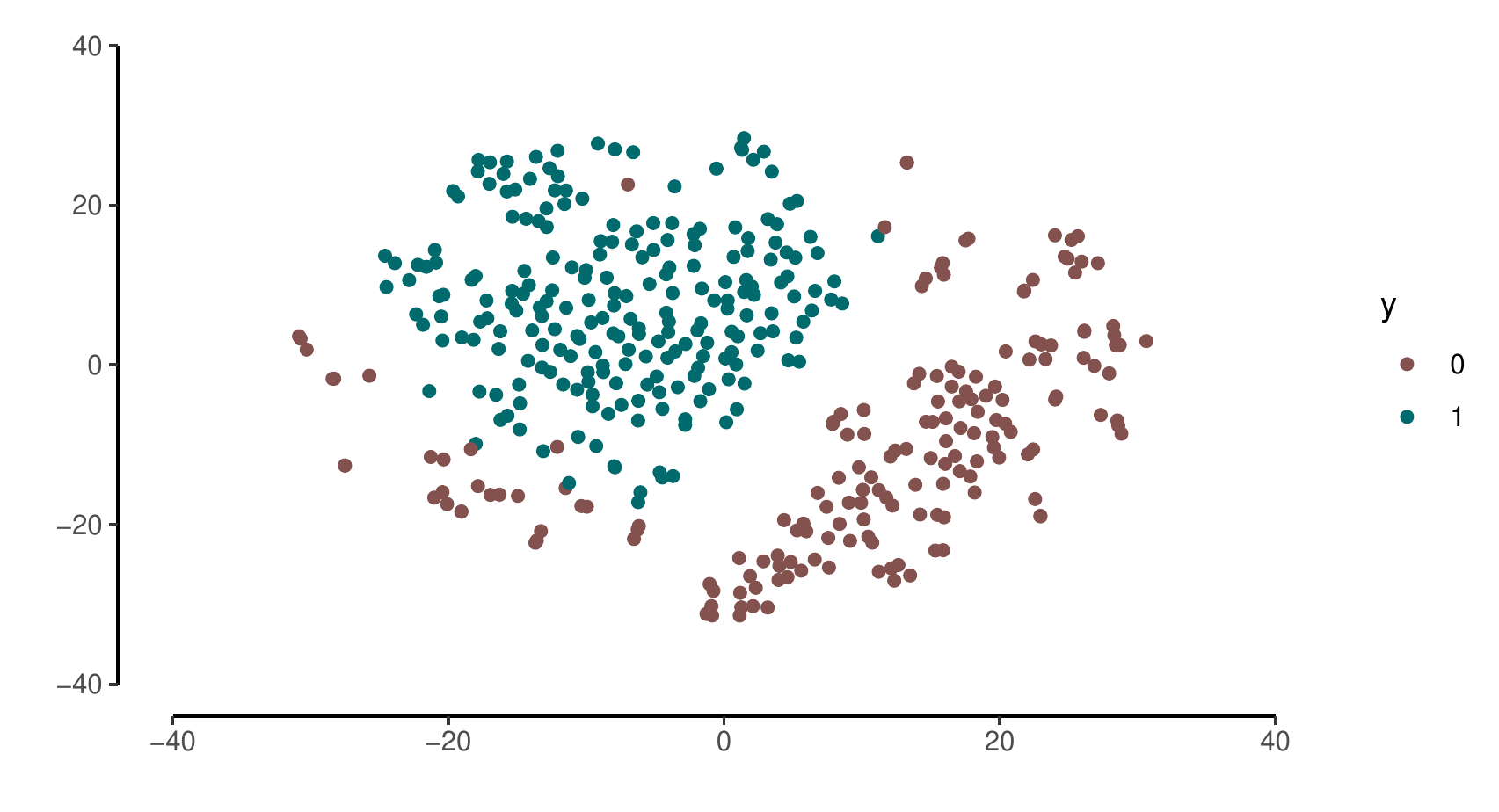}} \\
	\caption{Feature representation of DPLS-IV based on the t-SNE algorithm \citep{wattenberg2016use}, with the perplexity parameter equal to 50. Features are reduced to two distinct classes of the outcome $y$, with a class 1 when $y > 0$, and 0 otherwise. The final layer represents the predicted $y$ of individual neurons in the last layer of DPLS-IV. For computational benefits, we randomly sample 1000 observations from the original data. }\label{fig_tsne}
\end{figure}

\section{Discussion} \label{sect_disc}
In this article, we propose deep partial least squares for reducing the dimension of the instrumental variable space. The deep partial least squares method efficiently extracts features based on partial least squares and further processes the input with a feed-forward deep learner. The method is well-tailored for correlated instruments with sparse and nonlinear structures. More importantly, deep partial least squares are consistent, up to a proportionality constant. 

The applications on synthetic data as well as the application to the effect of childbearing on the mother's labor supply (\citealp{angrist1996children}) show that the deep partial least squares method outperforms other related methods. Moreover, a flexible number of layers allows us to efficiently capture nonlinearities embedded in the instrumental variable network.  

An interesting extension of this work is to consider a Bayesian model with various priors on the coefficients of interest. \cite{datta2013asymptotic} consider asymptotic properties of Bayes risk with the horseshoe prior. We believe, investigating different prior beliefs can result in the increased predictive performance of deep partial least squares. Another useful extension is to draw applications to eminent domain with judge characteristics as instruments (\citealp{belloni2012sparse}). In addition to prediction problems, the consistency of DPLS allows us to address the estimation of the treatment effect. In the future, we plan to demonstrate the precision of the treatment effect and compare it to other benchmark algorithms.

\bibliographystyle{apalike}
\bibliography{notes}

\begin{thebibliography}{}

\bibitem[Abadie et~al., 2002]{abadie2002instrumental}
Abadie, A., Angrist, J., and Imbens, G. (2002).
\newblock Instrumental variables estimates of the effect of subsidized training
  on the quantiles of trainee earnings.
\newblock {\em Econometrica}, 70(1):91--117.

\bibitem[Abdi and Williams, 2010]{abdi2010principal}
Abdi, H. and Williams, L.~J. (2010).
\newblock Principal component analysis.
\newblock {\em Wiley Interdisciplinary Reviews: Computational Statistics},
  2(4):433--459.

\bibitem[Adcock, 2007]{adcock2007extensions}
Adcock, C. (2007).
\newblock Extensions of stein's lemma for the skew-normal distribution.
\newblock {\em Communications in Statistics—Theory and Methods},
  36(9):1661--1671.

\bibitem[Adkins, 2012]{adkins2012testing}
Adkins, L.~C. (2012).
\newblock Testing parameter significance in instrumental variables probit
  estimators: some simulation.
\newblock {\em Journal of Statistical Computation and Simulation},
  82(10):1415--1436.

\bibitem[Amemiya, 1985]{amemiya1985instrumental}
Amemiya, Y. (1985).
\newblock Instrumental variable estimator for the nonlinear errors-in-variables
  model.
\newblock {\em Journal of Econometrics}, 28(3):273--289.

\bibitem[Angrist and Evans, 1996]{angrist1996children}
Angrist, J. and Evans, W.~N. (1996).
\newblock Children and their parents' labor supply: Evidence from exogenous
  variation in family size.
\newblock {\em National Bureau of Economic Research}.

\bibitem[Baum et~al., 2003]{baum2003instrumental}
Baum, C.~F., Schaffer, M.~E., and Stillman, S. (2003).
\newblock Instrumental variables and gmm: Estimation and testing.
\newblock {\em The Stata Journal}, 3(1):1--31.

\bibitem[Beise et~al., 2021]{beise2021decision}
Beise, H.-P., Da~Cruz, S.~D., and Schr{\"o}der, U. (2021).
\newblock On decision regions of narrow deep neural networks.
\newblock {\em Neural Networks}, 140:121--129.

\bibitem[Belloni et~al., 2012]{belloni2012sparse}
Belloni, A., Chen, D., Chernozhukov, V., and Hansen, C. (2012).
\newblock Sparse models and methods for optimal instruments with an application
  to eminent domain.
\newblock {\em Econometrica}, 80(6):2369--2429.

\bibitem[Belloni et~al., 2011]{belloni2011inference}
Belloni, A., Chernozhukov, V., and Hansen, C. (2011).
\newblock Inference for high-dimensional sparse econometric models.
\newblock {\em arXiv preprint arXiv:1201.0220}.

\bibitem[Bhadra et~al., 2020]{bhadra2020horseshoe}
Bhadra, A., Datta, J., Li, Y., and Polson, N. (2020).
\newblock Horseshoe regularisation for machine learning in complex and deep
  models 1.
\newblock {\em International Statistical Review}, 88(2):302--320.

\bibitem[Bhadra et~al., 2019]{bhadra2019lasso}
Bhadra, A., Datta, J., Polson, N.~G., and Willard, B. (2019).
\newblock Lasso meets horseshoe: A survey.
\newblock {\em Statistical Science}, 34(3):405--427.

\bibitem[Bottou, 2012]{bottou2012stochastic}
Bottou, L. (2012).
\newblock Stochastic gradient descent tricks.
\newblock In {\em Neural Networks: Tricks of the Trade}, pages 421--436.
  Springer.

\bibitem[Brillinger, 2012]{brillinger2012generalized}
Brillinger, D.~R. (2012).
\newblock A generalized linear model with “gaussian” regressor variables.
\newblock In {\em Selected Works of David Brillinger}, pages 589--606.
  Springer.

\bibitem[Cengiz et~al., 2022]{cengiz2022seeing}
Cengiz, D., Dube, A., Lindner, A., and Zentler-Munro, D. (2022).
\newblock Seeing beyond the trees: Using machine learning to estimate the
  impact of minimum wages on labor market outcomes.
\newblock {\em Journal of Labor Economics}, 40(S1):S203--S247.

\bibitem[Chernozhukov and Hansen, 2004]{chernozhukov2004effects}
Chernozhukov, V. and Hansen, C. (2004).
\newblock The effects of 401 (k) participation on the wealth distribution: an
  instrumental quantile regression analysis.
\newblock {\em Review of Economics and statistics}, 86(3):735--751.

\bibitem[Chernozhukov et~al., 2015a]{chernozhukov2015post}
Chernozhukov, V., Hansen, C., and Spindler, M. (2015a).
\newblock Post-selection and post-regularization inference in linear models
  with many controls and instruments.
\newblock {\em American Economic Review}, 105(5):486--490.

\bibitem[Chernozhukov et~al., 2015b]{chernozhukov2015valid}
Chernozhukov, V., Hansen, C., and Spindler, M. (2015b).
\newblock Valid post-selection and post-regularization inference: An
  elementary, general approach.
\newblock {\em Annu. Rev. Econ.}, 7(1):649--688.

\bibitem[Datta and Ghosh, 2013]{datta2013asymptotic}
Datta, J. and Ghosh, J.~K. (2013).
\newblock Asymptotic properties of bayes risk for the horseshoe prior.
\newblock {\em Bayesian Analysis}, 8(1):111--132.

\bibitem[De~Jong, 1993]{de1993simpls}
De~Jong, S. (1993).
\newblock Simpls: an alternative approach to partial least squares regression.
\newblock {\em Chemometrics and Intelligent Laboratory Systems},
  18(3):251--263.

\bibitem[Debiolles et~al., 2004]{debiolles2004combined}
Debiolles, A., Oukhellou, L., and Aknin, P. (2004).
\newblock Combined use of partial least squares regression and neural network
  for diagnosis tasks.
\newblock In {\em Proceedings of the 17th International Conference on Pattern
  Recognition, 2004. ICPR 2004.}, volume~4, pages 573--576. IEEE.

\bibitem[Dubey and Jain, 2019]{dubey2019comparative}
Dubey, A.~K. and Jain, V. (2019).
\newblock Comparative study of convolution neural network’s relu and
  leaky-relu activation functions.
\newblock In {\em Applications of Computing, Automation and Wireless Systems in
  Electrical Engineering}, pages 873--880. Springer.

\bibitem[Farrell et~al., 2021]{farrell2021deep}
Farrell, M.~H., Liang, T., and Misra, S. (2021).
\newblock Deep neural networks for estimation and inference.
\newblock {\em Econometrica}, 89(1):181--213.

\bibitem[Fisher, 1922]{fisher1922mathematical}
Fisher, R.~A. (1922).
\newblock On the mathematical foundations of theoretical statistics.
\newblock {\em Philosophical Transactions of the Royal Society of London.
  Series A, containing papers of a mathematical or physical character},
  222(594-604):309--368.

\bibitem[Florens et~al., 2002]{florens2002instrumental}
Florens, J.-P., Heckman, J., Meghir, C., and Vytlacil, E. (2002).
\newblock Instrumental variables, local instrumental variables and control
  functions.
\newblock Technical report, Cemmap Working Paper.

\bibitem[Gagnon-Bartsch et~al., 2013]{gagnon2013removing}
Gagnon-Bartsch, J.~A., Jacob, L., and Speed, T.~P. (2013).
\newblock Removing unwanted variation from high dimensional data with negative
  controls.
\newblock {\em Berkeley: Tech Reports from Dep Stat Univ California}, pages
  1--112.

\bibitem[Geladi and Kowalski, 1986]{geladi1986partial}
Geladi, P. and Kowalski, B.~R. (1986).
\newblock Partial least-squares regression: a tutorial.
\newblock {\em Analytica chimica acta}, 185:1--17.

\bibitem[Hahn et~al., 2020]{hahn2020bayesian}
Hahn, P.~R., Murray, J.~S., and Carvalho, C.~M. (2020).
\newblock Bayesian regression tree models for causal inference: Regularization,
  confounding, and heterogeneous effects (with discussion).
\newblock {\em Bayesian Analysis}, 15(3):965--1056.

\bibitem[Hanin and Rolnick, 2019]{hanin2019deep}
Hanin, B. and Rolnick, D. (2019).
\newblock Deep relu networks have surprisingly few activation patterns.
\newblock {\em Advances in Neural Information Processing Systems}, 32.

\bibitem[Hansen and Singleton, 1982]{hansen1982generalized}
Hansen, L.~P. and Singleton, K.~J. (1982).
\newblock Generalized instrumental variables estimation of nonlinear rational
  expectations models.
\newblock {\em Econometrica: Journal of the Econometric Society}, pages
  1269--1286.

\bibitem[Hartford et~al., 2016]{hartford2016counterfactual}
Hartford, J., Lewis, G., {Leyton-Brown}, K., and Taddy, M. (2016).
\newblock Counterfactual {{Prediction}} with {{Deep Instrumental Variables
  Networks}}.
\newblock {\em arXiv:1612.09596 [cs, stat]}.

\bibitem[Hartford et~al., 2017]{hartford2017deep}
Hartford, J., Lewis, G., Leyton-Brown, K., and Taddy, M. (2017).
\newblock Deep iv: A flexible approach for counterfactual prediction.
\newblock In {\em International Conference on Machine Learning}, pages
  1414--1423. PMLR.

\bibitem[Heckman and Navarro-Lozano, 2004]{heckman2004using}
Heckman, J. and Navarro-Lozano, S. (2004).
\newblock Using matching, instrumental variables, and control functions to
  estimate economic choice models.
\newblock {\em Review of Economics and Statistics}, 86(1):30--57.

\bibitem[Heckman, 1995]{heckman1995randomization}
Heckman, J.~J. (1995).
\newblock Randomization as an instrumental variable.
\newblock {\em National Bureau of Economic Research}.

\bibitem[Helland, 1990]{helland1990partial}
Helland, I.~S. (1990).
\newblock Partial least squares regression and statistical models.
\newblock {\em Scandinavian Journal of Statistics}, pages 97--114.

\bibitem[Hoogerheide et~al., 2007]{hoogerheide2007natural}
Hoogerheide, L., Kleibergen, F., and van Dijk, H.~K. (2007).
\newblock Natural conjugate priors for the instrumental variables regression
  model applied to the angrist--krueger data.
\newblock {\em Journal of Econometrics}, 138(1):63--103.

\bibitem[Iwata, 2001]{iwata2001recentered}
Iwata, S. (2001).
\newblock Recentered and {{Rescaled Instrumental Variable Estimation}} of
  {{Tobit}} and {{Probit Models}} with {{Errors}} in {{Variables}}.
\newblock {\em Econometric Reviews}, 20(3):319--335.

\bibitem[Iwata and Ghahramani, 2017]{iwata2017improving}
Iwata, T. and Ghahramani, Z. (2017).
\newblock Improving output uncertainty estimation and generalization in deep
  learning via neural network gaussian processes.
\newblock {\em arXiv preprint arXiv:1707.05922}.

\bibitem[Jeong et~al., 2003]{jeong2003measuring}
Jeong, H., N{\'e}da, Z., and Barab{\'a}si, A.-L. (2003).
\newblock Measuring preferential attachment in evolving networks.
\newblock {\em EPL (Europhysics Letters)}, 61(4):567.

\bibitem[Jia et~al., 2016]{jia2016optimized}
Jia, W., Zhao, D., and Ding, L. (2016).
\newblock An optimized rbf neural network algorithm based on partial least
  squares and genetic algorithm for classification of small sample.
\newblock {\em Applied Soft Computing}, 48:373--384.

\bibitem[Kong and Yu, 2018]{kong2018deep}
Kong, Y. and Yu, T. (2018).
\newblock A deep neural network model using random forest to extract feature
  representation for gene expression data classification.
\newblock {\em Scientific Reports}, 8(1):1--9.

\bibitem[Landsman and Ne{\v{s}}lehov{\'a}, 2008]{landsman2008stein}
Landsman, Z. and Ne{\v{s}}lehov{\'a}, J. (2008).
\newblock Stein's lemma for elliptical random vectors.
\newblock {\em Journal of Multivariate Analysis}, 99(5):912--927.

\bibitem[Liu et~al., 2020]{liu2020deep}
Liu, R., Shang, Z., and Cheng, G. (2020).
\newblock On deep instrumental variables estimate.
\newblock {\em arXiv preprint arXiv:2004.14954}.

\bibitem[Lopes and Polson, 2010]{lopes2010extracting}
Lopes, H.~F. and Polson, N.~G. (2010).
\newblock Extracting sp500 and nasdaq volatility: The credit crisis of
  2007-2008.
\newblock {\em Handbook of Applied Bayesian Analysis}, pages 319--342.

\bibitem[Marquardt and Snee, 1975]{marquardt1975ridge}
Marquardt, D.~W. and Snee, R.~D. (1975).
\newblock Ridge regression in practice.
\newblock {\em The American Statistician}, 29(1):3--20.

\bibitem[McCullagh and Polson, 2018]{mccullagh2018statistical}
McCullagh, P. and Polson, N.~G. (2018).
\newblock Statistical sparsity.
\newblock {\em Biometrika}, 105(4):797--814.

\bibitem[Mogstad et~al., 2021]{mogstad2021causal}
Mogstad, M., Torgovitsky, A., and Walters, C.~R. (2021).
\newblock The causal interpretation of two-stage least squares with multiple
  instrumental variables.
\newblock {\em American Economic Review}, 111(11):3663--98.

\bibitem[Monis et~al., 2022]{monis2022efficient}
Monis, J.~B., Sarkar, R., Nagavarun, S., and Bhadra, J. (2022).
\newblock Efficient net: Identification of crop insects using convolutional
  neural networks.
\newblock In {\em 2022 International Conference on Advances in Computing,
  Communication and Applied Informatics (ACCAI)}, pages 1--7. IEEE.

\bibitem[Moreno and Terwiesch, 2014]{moreno2014doing}
Moreno, A. and Terwiesch, C. (2014).
\newblock Doing business with strangers: Reputation in online service
  marketplaces.
\newblock {\em Information Systems Research}, 25(4):865--886.

\bibitem[Naik and Tsai, 2000]{naik2000partial}
Naik, P. and Tsai, C.-L. (2000).
\newblock Partial least squares estimator for single-index models.
\newblock {\em Journal of the Royal Statistical Society: Series B (Statistical
  Methodology)}, 62(4):763--771.

\bibitem[Navarro, 2010]{navarro2010control}
Navarro, S. (2010).
\newblock Control functions.
\newblock In {\em Microeconometrics}, pages 20--28. Springer.

\bibitem[Polson et~al., 2021]{polson2021deep}
Polson, N., Sokolov, V., and Xu, J. (2021).
\newblock Deep learning partial least squares.
\newblock {\em arXiv preprint arXiv:2106.14085}.

\bibitem[Polson and Ro{\v{c}}kov{\'a}, 2018]{polson2018posterior}
Polson, N.~G. and Ro{\v{c}}kov{\'a}, V. (2018).
\newblock Posterior concentration for sparse deep learning.
\newblock {\em Advances in Neural Information Processing Systems}, 31.

\bibitem[Polson and Sokolov, 2017]{polson2017deep}
Polson, N.~G. and Sokolov, V.~O. (2017).
\newblock Deep learning for short-term traffic flow prediction.
\newblock {\em Transportation Research Part C: Emerging Technologies},
  79:1--17.

\bibitem[Polson and Sun, 2019]{polson2019bayesian}
Polson, N.~G. and Sun, L. (2019).
\newblock Bayesian l 0-regularized least squares.
\newblock {\em Applied Stochastic Models in Business and Industry},
  35(3):717--731.

\bibitem[Puelz et~al., 2017]{puelz2017variable}
Puelz, D., Hahn, P.~R., and Carvalho, C.~M. (2017).
\newblock Variable selection in seemingly unrelated regressions with random
  predictors.
\newblock {\em Bayesian Analysis}, 12(4):969--989.

\bibitem[Sarstedt et~al., 2022]{sarstedt2022progress}
Sarstedt, M., Hair, J.~F., Pick, M., Liengaard, B.~D., Radomir, L., and Ringle,
  C.~M. (2022).
\newblock Progress in partial least squares structural equation modeling use in
  marketing research in the last decade.
\newblock {\em Psychology \& Marketing}, 39(5):1035--1064.

\bibitem[Stone and Brooks, 1990]{stone1990continuum}
Stone, M. and Brooks, R.~J. (1990).
\newblock Continuum regression: cross-validated sequentially constructed
  prediction embracing ordinary least squares, partial least squares and
  principal components regression.
\newblock {\em Journal of the Royal Statistical Society: Series B
  (Methodological)}, 52(2):237--258.

\bibitem[Terza et~al., 2008]{terza2008two}
Terza, J.~V., Basu, A., and Rathouz, P.~J. (2008).
\newblock Two-stage residual inclusion estimation: addressing endogeneity in
  health econometric modeling.
\newblock {\em Journal of Health Economics}, 27(3):531--543.

\bibitem[Tibshirani, 2011]{tibshirani2011regression}
Tibshirani, R. (2011).
\newblock Regression shrinkage and selection via the lasso: a retrospective.
\newblock {\em Journal of the Royal Statistical Society: Series B (Statistical
  Methodology)}, 73(3):273--282.

\bibitem[Van~der Vaart, 2000]{van2000asymptotic}
Van~der Vaart, A.~W. (2000).
\newblock {\em Asymptotic statistics}, volume~3.
\newblock Cambridge university press.

\bibitem[Wattenberg et~al., 2016]{wattenberg2016use}
Wattenberg, M., Vi{\'e}gas, F., and Johnson, I. (2016).
\newblock How to use t-sne effectively.
\newblock {\em Distill}, 1(10):e2.

\bibitem[White, 1982]{white1982instrumental}
White, H. (1982).
\newblock Instrumental variables regression with independent observations.
\newblock {\em Econometrica: Journal of the Econometric Society}, pages
  483--499.

\bibitem[Yarotsky, 2017]{yarotsky2017error}
Yarotsky, D. (2017).
\newblock Error bounds for approximations with deep relu networks.
\newblock {\em Neural Networks}, 94:103--114.

\bibitem[Zellner, 1975]{zellner1975bayesian}
Zellner, A. (1975).
\newblock Bayesian analysis of regression error terms.
\newblock {\em Journal of the American Statistical Association},
  70(349):138--144.

\end{thebibliography}

\section*{Appendix}
\subsection{Prediction Performance Measures with LeakyReLu}\label{app_leakyrelu}
\begin{table}[H] \centering 
\begin{tabular}{@{\extracolsep{5pt}}lccccccc} 
\\[-1.8ex]\hline 
\hline \\[-1.8ex] 
Treatment network \\
\hline \\[-1.8ex] 
Measures & \multicolumn{1}{c}{PLS} & \multicolumn{1}{c}{OLS} & \multicolumn{1}{c}{LASSO} &  \multicolumn{1}{c}{DeepIV} &  \multicolumn{1}{c}{DPLS-IV}\\ 
\hline \\[-1.8ex] 
$R^2$ & 0.757 & 0.696 & 0.760 & 0.939 & \textbf{0.957} \\ 
RMSE & 10.503 & 21.404 & 10.398 & 5.498 & \bf{4.449} \\ 
\hline \\[-1.8ex] 
Outcome network \\
\hline \\[-1.8ex] 
$R^2$ & 0.933 & 0.933 & 0.934 & 0.878 & \textbf{0.938} \\ 
RMSE & 1.666 & 1.720 & 1.667 & 2.224 & \bf{1.623} \\ 
\hline \\[-1.8ex] 
\end{tabular}\
\caption{Prediction performance of DPLS-IV relative to other methods. We present out-of-sample $R^2$ and RMSE. To present the maximum prediction performance of OLS, PLS and Tobit in the outcome network, we use residuals predicted by DPLS-IV in the first stage. We use LeakyReLU to simulate the outcome variables. DPLS-IV consists of an input layer with 50 neurons, and a hidden layer with 30 neurons. DeepIV consists of 100 neurons in the input layer, followed by three hidden layers with 100, 100 and 30 neurons, respectively. We use ReLU to activate neurons in DeepIV and DPLS-IV. }\label{tab_networks_lrelu} 
\end{table}

\begin{figure}[H]
	\centering
	\subfloat[OLS]{\includegraphics[width=0.4\linewidth]{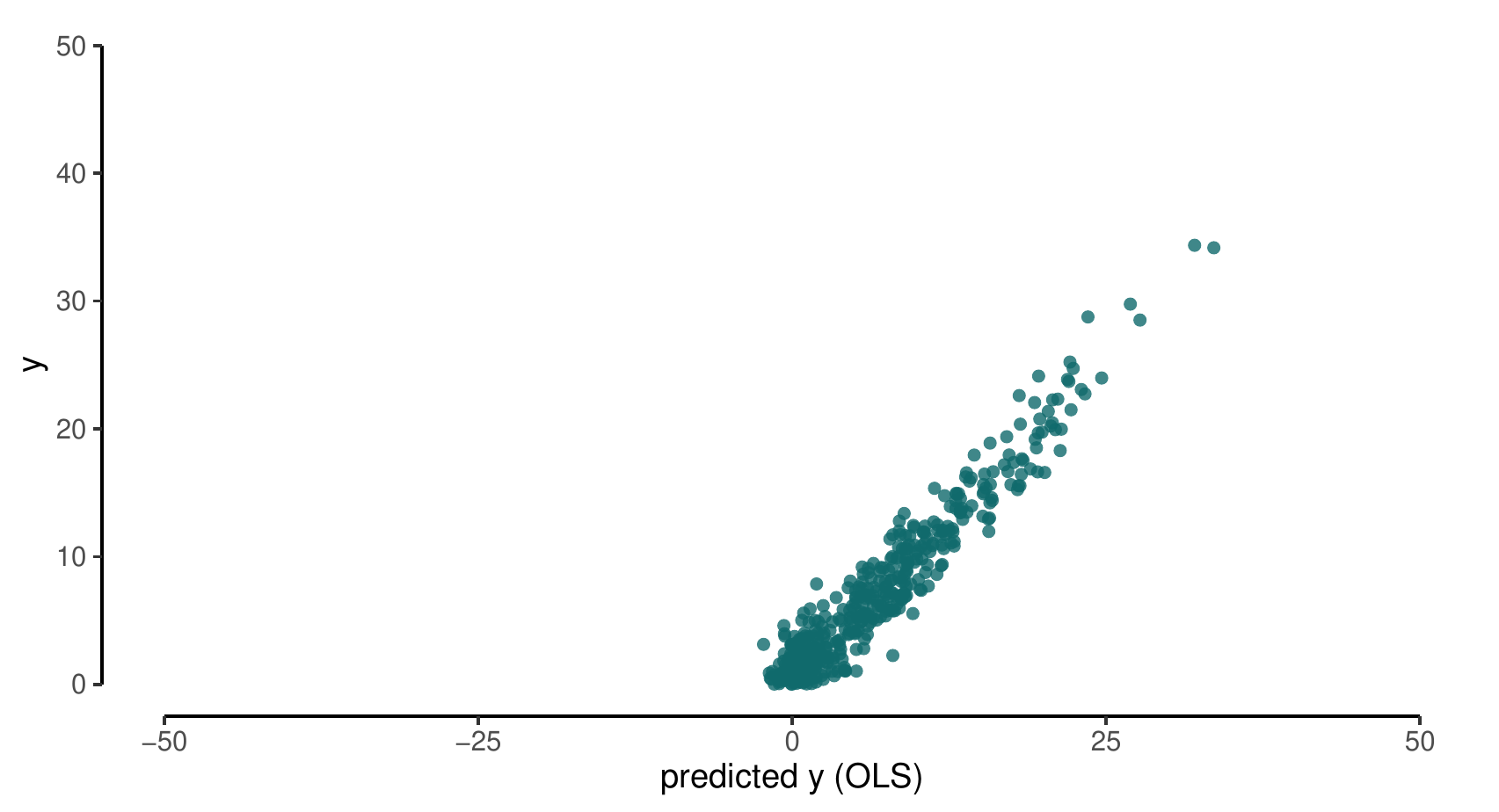}}\quad
	\subfloat[PLS]{\includegraphics[width=0.4\linewidth]{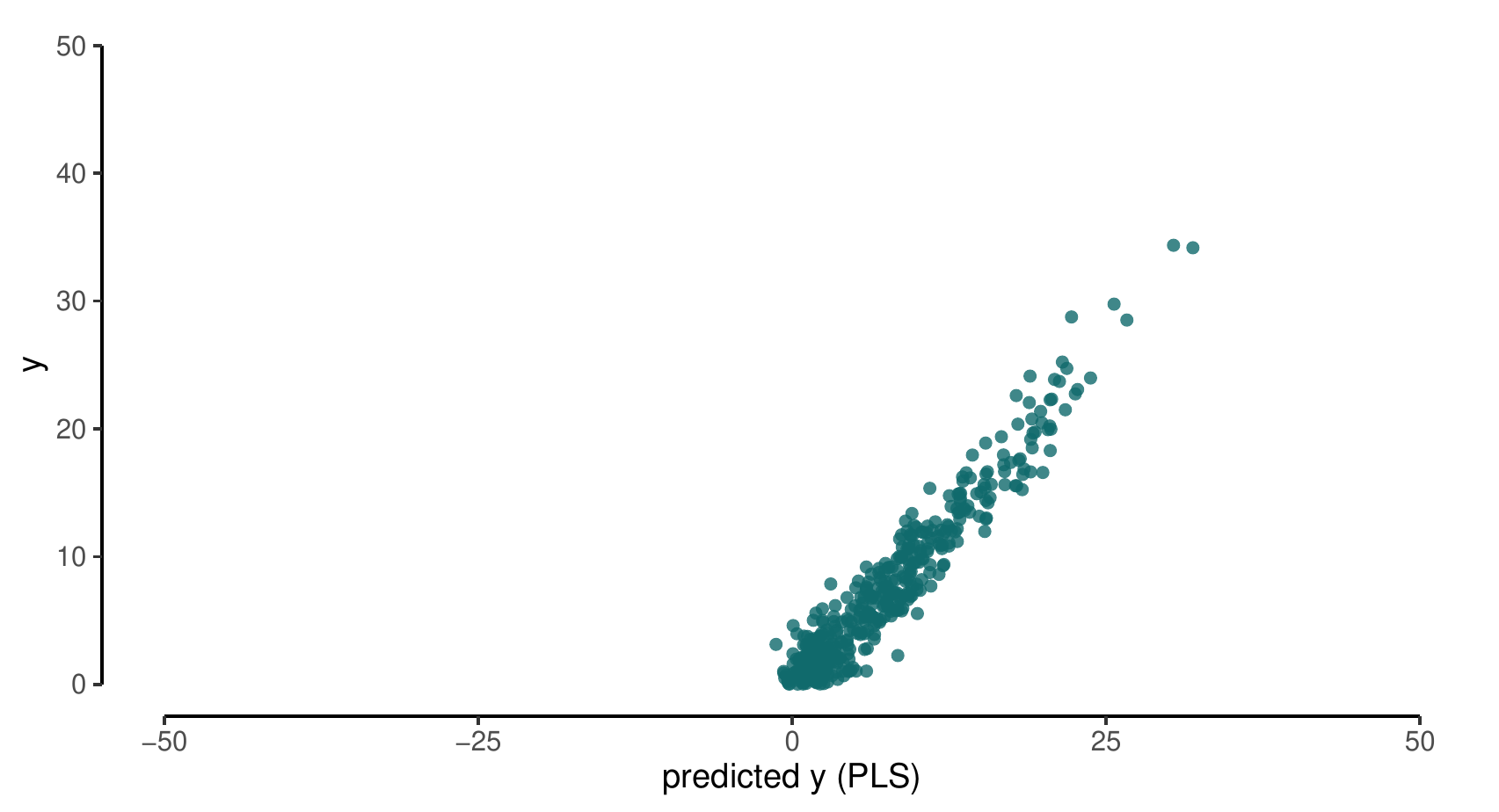}} \\
	\subfloat[DeepIV]{\includegraphics[width=0.4\linewidth]{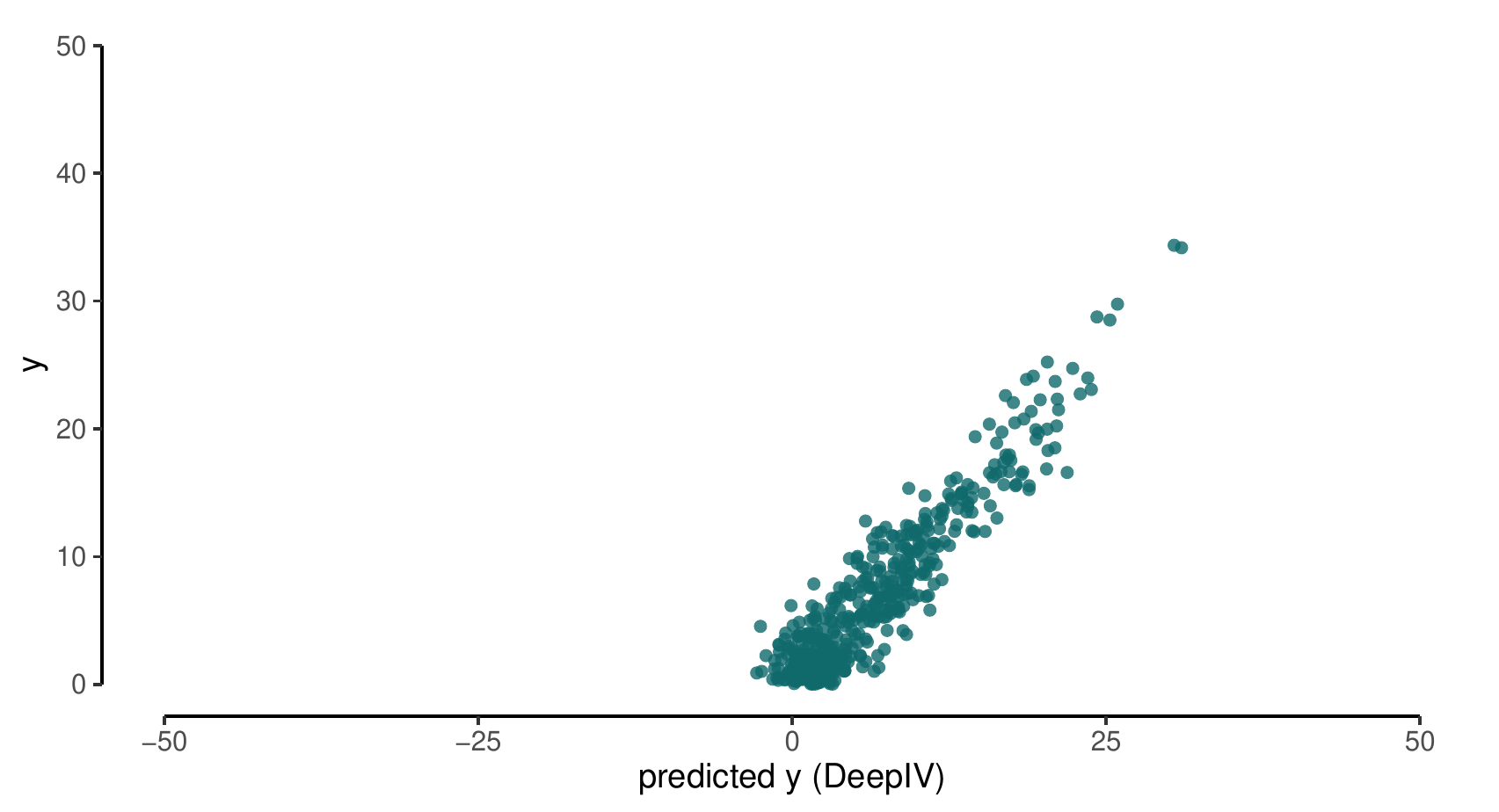}} 
	\subfloat[LASSO]{\includegraphics[width=0.4\linewidth]{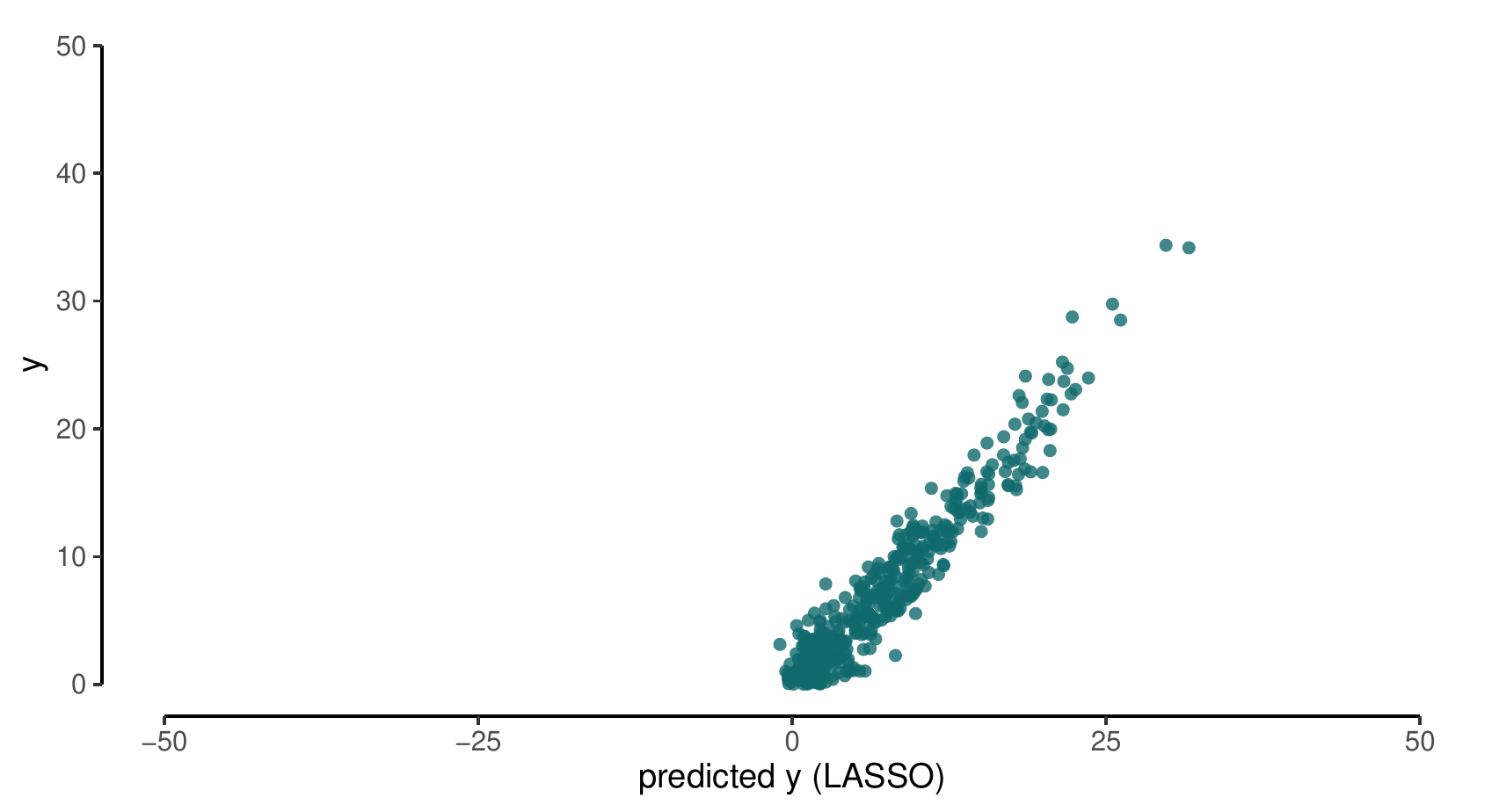}}\quad\\
	\subfloat[DPLS-IV]{\includegraphics[width=0.4\linewidth]{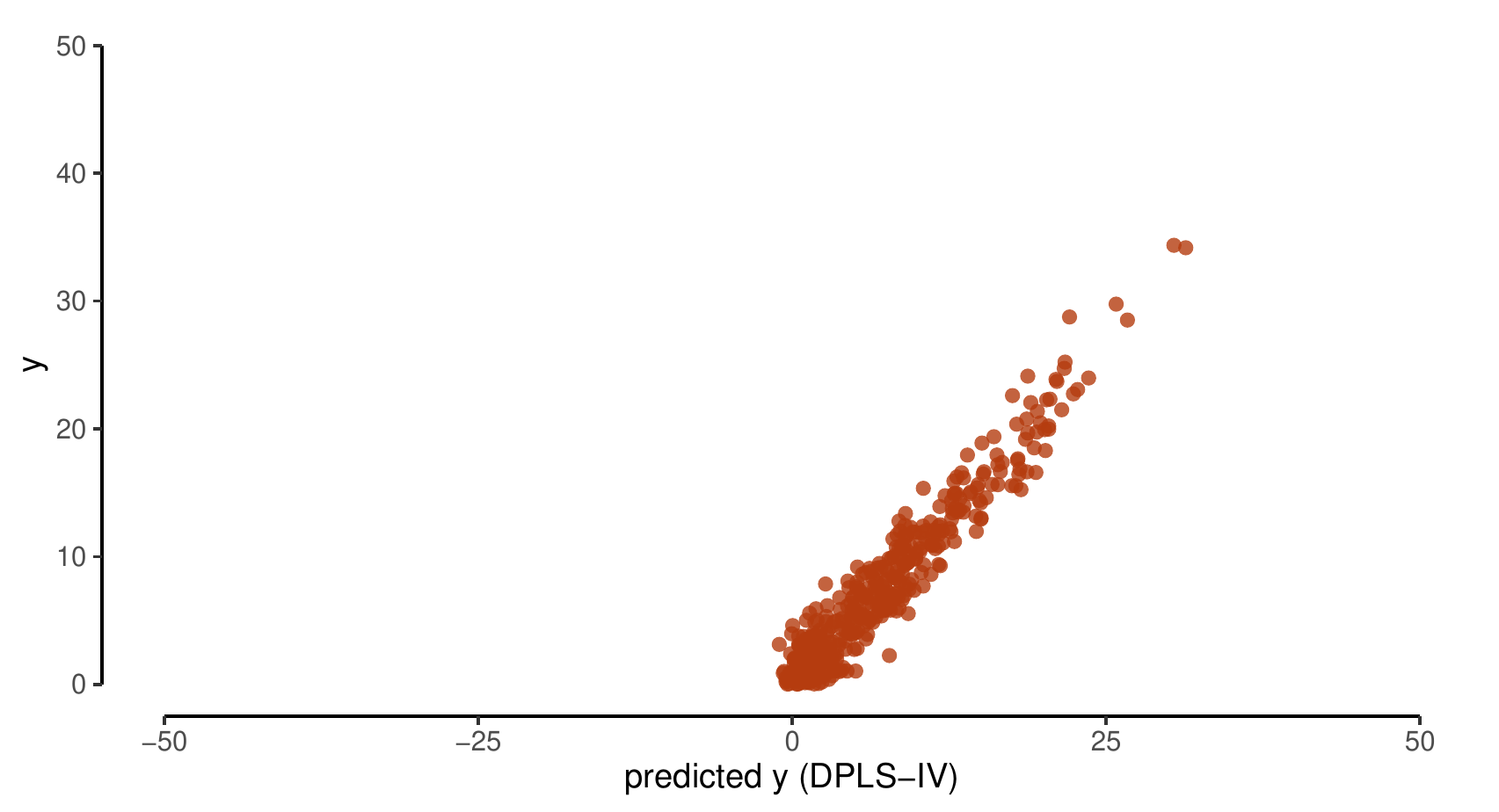}} \\
	\caption{The second stage prediction performance. The X-axis depicts predicted outcome $\widehat{y}$, and the Y-axis represents true values of $y$. DPLS-IV denotes the method introduced in this study. We use test data for evaluating the methods. }\label{fig_secondstage}
\end{figure}

\begin{figure}[H]
	\centering
	\subfloat[]{\includegraphics[width=0.7\linewidth]{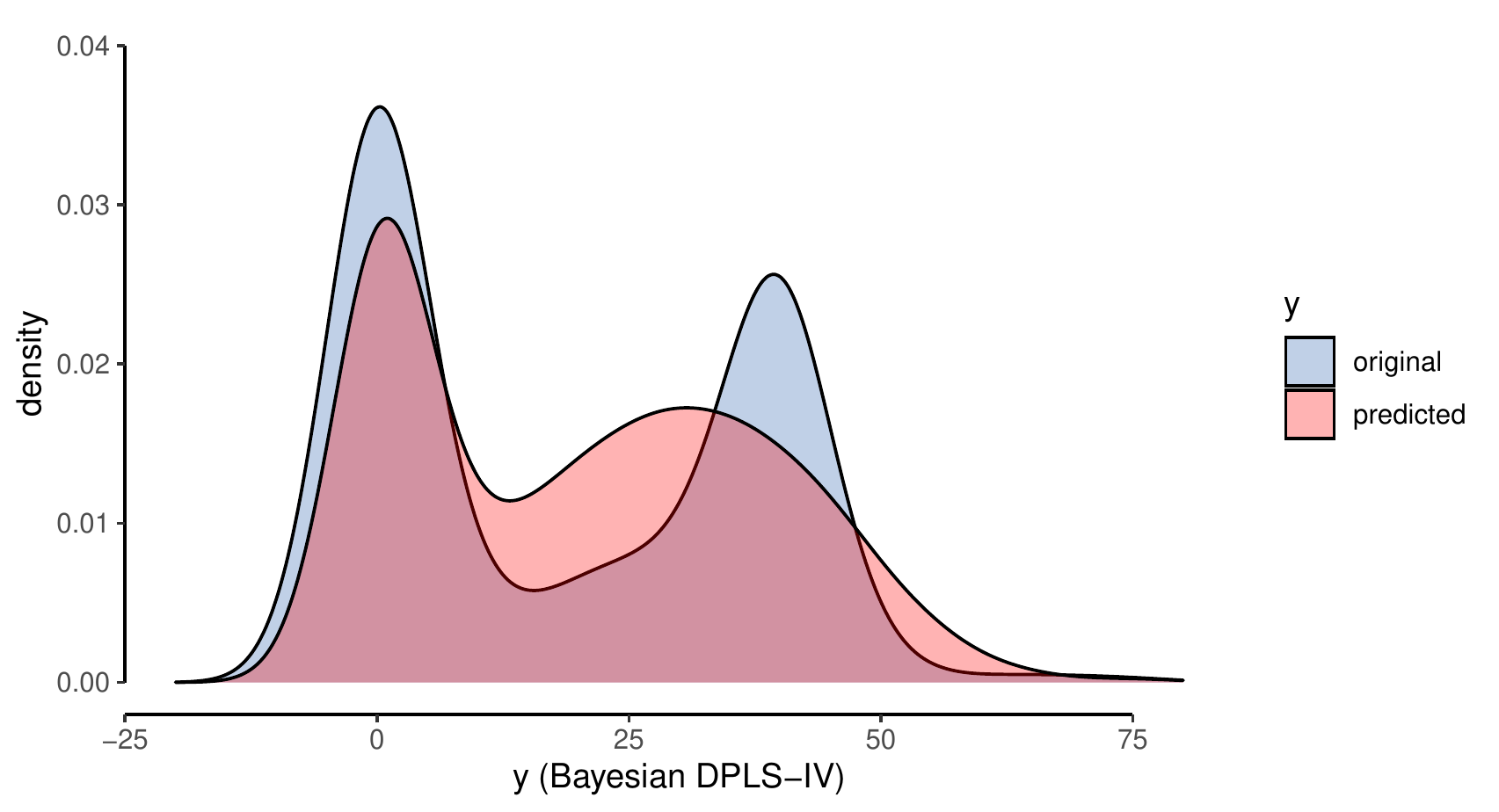}} \\
	\caption{The original distribution of the outcome variable and the one predicted by Bayesian DPLS-IV. Bayesian DPLS-IV consists of two hidden layers and a ReLU activation function. }\label{fig_bayesian_ae}
\end{figure}

\end{document}